\title{$^*$-Continuous Kleene $\omega$-Algebras for Energy
  Problems\thanks{%
    The work of the first author was supported by the National
    Foundation of Hungary for Scientific Research, Grant no.\
    K~108448.  The work of the second and third authors was supported
    by ANR MALTHY, grant no.~ANR-13-INSE-0003 from the French National
    Research Foundation, and by the EU FP7 SENSATION project, grant
    no.\ 318490 (FP7-ICT-2011-8).}}
\author{%
  Zolt{\'a}n {\'E}sik%
  \institute{University of Szeged, Hungary}%
  \and Uli Fahrenberg \qquad Axel Legay%
  \institute{Inria Rennes, France}%
}
\begin{document}

\maketitle

\begin{abstract}
  Energy problems are important in the formal analysis of embedded or
  autonomous systems.  Using recent results on $^*$-continuous Kleene
  $\omega$-algebras, we show here that energy problems can be solved
  by algebraic manipulations on the transition matrix of energy
  automata.  To this end, we prove general results about certain
  classes of finitely additive functions on complete lattices which
  should be of a more general interest.
\end{abstract}

\section{Introduction}

Energy problems are concerned with the question whether a given system
admits infinite schedules during which (1) certain tasks can be
repeatedly accomplished and (2) the system never runs out of energy
(or other specified resources).  These are important in areas such as
embedded systems or autonomous systems and, starting
with~\cite{DBLP:conf/formats/BouyerFLMS08}, have attracted some
attention in recent years, for example
in~\cite{DBLP:conf/ictac/FahrenbergJLS11, DBLP:conf/lata/Quaas11,
  DBLP:conf/hybrid/BouyerFLM10, DBLP:journals/pe/BouyerLM14,
  DBLP:journals/iandc/VelnerC0HRR15, DBLP:journals/tcs/ChatterjeeD12,
  DBLP:conf/hybrid/BrenguierCR14, DBLP:conf/birthday/JuhlLR13,
  DBLP:conf/csl/DegorreDGRT10}.

With the purpose of generalizing some of the above approaches, we have
in~\cite{DBLP:conf/atva/EsikFLQ13, conf/wata/FahrenbergLQ12}
introduced \emph{energy automata}.  These are finite automata whose
transitions are labeled with \emph{energy functions} which specify how
energy values change from one system state to another.  Using the
theory of semiring-weighted automata~\cite{book/DrosteKV09}, we have
shown in~\cite{DBLP:conf/atva/EsikFLQ13} that energy problems in such
automata can be solved in a simple static way which only involves
manipulations of energy functions.

In order to put the work of~\cite{DBLP:conf/atva/EsikFLQ13} on a more
solid theoretical footing and with an eye to future generalizations,
we have recently introduced a new algebraic structure of
\emph{$^*$-continuous Kleene
  $\omega$-algebras}~\cite{DBLP:conf/dlt/EsikFL15} (see
also~\cite{DBLP:journals/corr/EsikFL15} for the long version).  We
show here that energy functions form such a $^*$-continuous Kleene
$\omega$-algebra.  Using the fact, proven
in~\cite{DBLP:conf/dlt/EsikFL15}, that for automata with transition
weights in $^*$-continuous Kleene $\omega$-algebras, reachability and
B{\"u}chi acceptance can be computed by algebraic manipulations on the
transition matrix of the automaton, the results
from~\cite{DBLP:conf/atva/EsikFLQ13} follow.

\section{Energy Automata}
\label{se:energy}

The transition labels on the energy automata which we consider in the
paper, will be functions which model transformations of energy levels
between system states.  Such transformations have the (natural)
properties that below a certain energy level, the transition might be
disabled (not enough energy is available to perform the transition), and
an increase in input energy always yields at least the same increase in
output energy.  Thus the following definition:

\begin{definition}
  \label{de:energy}
  An \emph{energy function} is a partial function $f: \Realnn\parto
  \Realnn$ which is defined on a closed interval $[ l_f,
  \infty\mathclose[$ or on an open interval $\mathopen] l_f,
  \infty\mathclose[$, for some lower bound $l_f\ge 0$, and such that for
  all $x\le y$ for which $f$ is defined,
  \begin{equation}
    \label{eq:deriv1}
    y f\ge x f+ y- x\,.
  \end{equation}
  The class of all energy functions is denoted by $\mathcal F$.
\end{definition}

Note that we write function composition and application in
diagrammatical order, \emph{from left to right}, in this paper.  Hence
we write $f;g$, or simply $f g$, for the composition $g\circ f$ and
$x; f$ or $x f$ for function application $f( x)$.  This is because we
will be concerned with \emph{algebras} of functions, in which function
composition is multiplication, and where it is customary to write
multiplication in diagrammatical order.

Thus energy functions are strictly increasing, and in points where
they are differentiable, the derivative is at least $1$.  The inverse
functions to energy functions exist, but are generally not energy
functions.  Energy functions can be \emph{composed}, where it is
understood that for a composition $f g$, the interval of definition is
$\{ x\in \Realnn\mid x f\text{ and } x f g\text{ defined}\}$. 

\begin{lemma}
  \label{le:inf-strict}
  Let $f\in \mathcal F$ and $x\in \Realnn$. If $x f< x$, then there is
  $N\in \Nat$ for which $x f^N$ is not defined.  If $x f> x$, then for
  all $P\in \Real$ there is $N\in \Nat$ for which $x f^N\ge P$.
\end{lemma}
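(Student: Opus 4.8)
The plan is to exploit inequality~\eqref{eq:deriv1} in its equivalent ``slope at least one'' form: whenever $a,b\in\Realnn$ both lie in the domain of $f$ and $a\le b$, then $bf-af\ge b-a$. Applying this to two \emph{consecutive} iterates shows that the successive increments of the orbit $x, xf, xf^2,\dots$ of $x$ under $f$ never shrink in absolute value and never change sign. Concretely, I first dispose of the trivial case: if $xf$ is undefined then the first claim holds with $N=1$ and the second is vacuous (``$xf>x$'' presupposes that $xf$ is defined). So assume $xf$ is defined and set $\delta=|x-xf|>0$ for whichever of the two cases we are in. I would then prove by induction on $k$ that, as long as $xf^{k+1}$ is defined, the iterates $x,xf,\dots,xf^{k+1}$ are monotone in the direction dictated by the sign of $xf-x$ and $|xf^{k+1}-xf^{k}|\ge\delta$. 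The base case is $|xf-x|=\delta$; the inductive step feeds inequality~\eqref{eq:deriv1} the ordered pair formed by $xf^{k-1}$ and $xf^{k}$ (both of which lie in $\operatorname{dom}f$ as soon as $xf^{k+1}$ is defined), using the monotonicity half of the hypothesis to know which is the smaller one. Telescoping the increments then yields $xf^{n}\le x-n\delta$ in the case $xf<x$, and $xf^{n}\ge x+n\delta$ in the case $xf>x$, for every $n$ for which $xf^{n}$ is defined.

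For the first statement, recall that $\operatorname{dom}f\subseteq\Realnn$ and that $f$ takes values in $\Realnn$, so any defined value $xf^{n}$ is nonnegative. Since $x-n\delta<0$ once $n>x/\delta$, the bound $xf^{n}\le x-n\delta$ forces $xf^{n}$ to be undefined for every such $n$; any $N>x/\delta$ witnesses the claim.

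For the second statement I additionally need the iterates to remain defined, and this is the one point requiring a little care. I would fold it into the same induction: from $xf^{n}\ge x+n\delta$ and $\delta>0$ we get, for $n\ge1$, that $xf^{n}>x\ge l_f$ (the inequality $x\ge l_f$ holds because $xf$ is defined), hence $xf^{n}$ lies in $\operatorname{dom}f$, since the domain of $f$ contains every point strictly above $l_f$ regardless of whether it is $[l_f,\infty\mathclose[$ or $\mathopen]l_f,\infty\mathclose[$; therefore $xf^{n+1}$ is again defined. Thus all iterates exist and $xf^{n}\ge x+n\delta\to\infty$, so given $P$ (which we may assume exceeds $x$) any $N$ with $x+N\delta\ge P$ works.

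The main obstacle here is nothing deep but purely one of bookkeeping: threading the definedness of the partial iterates through the argument (trivial in the decreasing case, genuinely needed in the increasing case), and, at each inductive step, applying inequality~\eqref{eq:deriv1} to the pair $xf^{k-1},xf^{k}$ taken \emph{in the correct order} — which is exactly what the monotonicity part of the induction hypothesis supplies.
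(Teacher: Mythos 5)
Your proposal is correct and follows essentially the same route as the paper's proof: extract a uniform per-step decrement/increment $M=|x-xf|>0$ from inequality~\eqref{eq:deriv1} applied to consecutive iterates, then telescope to force the orbit below $0$ (hence undefined) or above any bound $P$. You are merely more explicit about the definedness and monotonicity bookkeeping (in particular that $xf^n>x\ge l_f$ keeps the iterates in the domain in the increasing case), which the paper's terser argument leaves implicit.
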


\begin{proof}
  In the first case, we have $x- x f= M> 0$.  Using~\eqref{eq:deriv1},
  we see that $x f^{ n+ 1}\le x f^n- M$ for all $n\in \Nat$ for which
  $x f^{ n+ 1}$ is defined.  Hence $( x f^n)_{ n\in \Nat}$ decreases
  without bound, so that there must be $N\in \Nat$ such that $x f^N$
  is undefined.
  
  In the second case, we have $x f- x= M> 0$.  Again
  using~\eqref{eq:deriv1}, we see that $x f^{ n+ 1}> x f^n+ M$ for all
  $n\in \Nat$.  Hence $( x f^n)_{ n\in \Nat}$ increases without bound,
  so that for any $P\in \Real$ there must be $N\in \Nat$ for which
  $x f^N\ge P$.
\end{proof}

Note that property~\eqref{eq:deriv1} is not only sufficient for
Lemma~\ref{le:inf-strict}, but in a sense also necessary: if
$0< \alpha< 1$ and $f: \Realnn\to \Realnn$ is the function
$x f= 1+ \alpha x$, then
$x f^n= \sum_{ i= 0}^{ n- 1} \alpha^i+ \alpha^n x$ for all
$n\in \Nat$, hence $\lim_{ n\to \infty} x f^n= \frac1{ 1- \alpha}$, so
Lemma~\ref{le:inf-strict} does not hold for $f$.  On the other hand,
$y f= x f+ \alpha( y- x)$ for all $x\le y$, so~\eqref{eq:deriv1}
``almost'' holds.

\begin{definition}
  An \emph{energy automaton} $( S, s_0, T, F)$ consists of a finite
  set $S$ of states, with initial state $s_0\in S$, a finite set
  $T\subseteq S\times \mathcal F\times S$ of transitions labeled with
  energy functions, and a subset $F\subseteq S$ of acceptance states.
\end{definition}

\begin{figure}[tbp]
  \centering
  \begin{tikzpicture}[->,>=stealth',shorten >=1pt,auto,node
    distance=2.0cm,initial text=,scale=1]
    \tikzstyle{every node}=[font=\small]
    \tikzstyle{every state}=[fill=white,shape=circle,inner
    sep=.5mm,minimum size=4mm,outer sep=.1mm]
    \node[state, initial] (1) at (0,0) {};
    \node[state, accepting] (2) at (4,0) {};
    \node[state] (3) at (8,0) {};
    \path (1) edge[out=10,in=170] node[above] {$x\mapsto x+ 2; x\ge 2$}
    (2);
    \path (1) edge[out=-10,in=-170] node[below] {$x\mapsto x+ 3; x> 1$}
    (2);
    \path (2) edge[out=120,in=60,loop] node[above] {$x\mapsto 2x- 2; x\ge
      1$} (2);
    \path (2) edge[out=10,in=170] node[above] {$x\mapsto x- 1; x> 1$}
    (3);
    \path (3) edge[out=190,in=-10] node[below] {$x\mapsto x+ 1;
      x\ge 0$}
    (2);
  \end{tikzpicture}
  \caption{%
    \label{fi:eauto}
    A simple energy automaton.}
\end{figure}
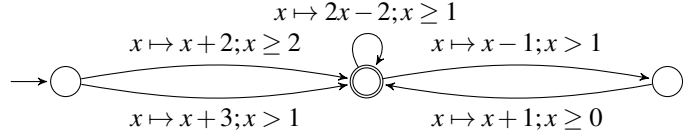

We show an example of a simple energy automaton in Fig.~\ref{fi:eauto}.
Here we use inequalities to give the definition intervals of energy
functions.

A finite \emph{path} in an energy automaton is a finite sequence of
transitions
$\pi= (s_0,f_1,s_1), (s_1,f_2,s_2),\dotsc,$ $(s_{n-1},f_n,s_n)$.  We use
$f_\pi$ to denote the combined energy function $f_1 f_2\dotsm f_n$ of
such a finite path.  We will also use infinite paths, but note that
these generally do not allow for combined energy functions.

A \emph{global state} of an energy automaton is a pair $q=( s, x)$ with
$s\in S$ and $x\in \Realnn$.  A transition between global states is of
the form $((s,x), f, (s',x'))$ such that $(s,f,s')\in T$ and $x'=f(x)$.
A (finite or infinite) \emph{run} of $( S, T)$ is a path in the graph of
global states and transitions.

We are ready to state the decision problems with which our main concern
will lie.  As the input to a decision problem must be in some way
finitely representable, we will state them for subclasses $\mathcal
F'\subseteq \mathcal F$ of \emph{computable} energy functions; an
$\mathcal F'$-automaton is an energy automaton $( S, T)$ with
$T\subseteq S\times \mathcal F'\times S$.

\begin{problem}[Reachability]
  \label{pb:reach}
  Given a subset $\F'\subseteq \F$ of computable functions, an
  $\F'$-automaton $A=( S, s_0, T, F)$ and a computable initial energy
  $x_0\in \Realnn$: does there exist a finite run of $A$ from
  $(s_0, x_0)$ which ends in a state in $F$?
\end{problem}

\begin{problem}[B{\"u}chi acceptance]
  \label{pb:buchi}
  Given a subset $\F'\subseteq \F$ of computable functions, an
  $\F'$-automaton $A=( S, s_0, T, F)$ and a computable initial energy
  $x_0\in \Realnn$: does there exist an infinite run of $A$ from
  $(s_0, x_0)$ which visits $F$ infinitely often?
\end{problem}

As customary, a run such as in the statements above is said to be
accepting.


\section{Algebraic Preliminaries}

We now turn our attention to the algebraic setting of $^*$-continuous
Kleene algebras and related structures, before revisiting energy
automata in Section~\ref{se:energy2}.  In this section we review some
results on $^*$-continuous Kleene algebras and $^*$-continuous Kleene
$\omega$-algebras.

\subsection{$^*$-Continuous Kleene $\omega$-Algebras}
\label{se:starcontkleom}

A \emph{semiring}~\cite{book/BerstelR10, book/Golan99} $S =
(S,+,\cdot,0,1)$ consists of a commutative monoid $(S,+,0)$ and a
monoid $(S,\cdot,1)$ such that the distributive laws
\begin{align*}
  x(y+ z) & = xy + xz\\
  (y+z)x &= yx + zx
\end{align*}
and the zero laws
\begin{equation*}
  0\cdot x = 0 = x \cdot 0
\end{equation*}
hold for all $x,y,z \in S$. It follows that the product operation
distributes over all finite sums.

An \emph{idempotent semiring} is a semiring $S$ whose sum operation is
idempotent, so that $x + x = x$ for all $x \in S$. Each idempotent
semiring $S$ is partially ordered by the relation $x\le y$ iff $x + y
= y$, and then sum and product preserve the partial order and $0$ is
the least element. Moreover, for all $x,y \in S$, $x + y$ is the least
upper bound of the set $\{x,y\}$. Accordingly, in an idempotent
semiring $S$, we will usually denote the sum operation by $\vee$ and
$0$ by $\bot$.

A \emph{Kleene algebra}~\cite{DBLP:journals/iandc/Kozen94} is an
idempotent semiring $S = (S,\vee, \cdot, \bot,1)$ equipped with a star
operation $^*: S \to S$ such that for all $x,y\in S$, $yx^*$ is the
least solution of the fixed point equation $z = zx \vee y$ and $x^*y$
is the least solution of the fixed point equation $z = xz \vee y$ with
respect to the natural order.

A \emph{$^*$-continuous Kleene
  algebra}~\cite{DBLP:journals/iandc/Kozen94} is a Kleene algebra $S=(
S, \vee, \cdot, ^*, \bot, 1)$ in which the infinite suprema $\bigvee\{
x^n\mid n\ge 0\}$ exist for all $x\in S$, $x^*= \bigvee\{ x^n\mid n\ge
0\}$ for every $x\in S$, and product preserves such suprema:
\begin{equation*}
  y \big( \bigvee_{ n\ge 0} x^n \big)= \bigvee_{ n\ge 0} y x^n\quad
  \text{and}\quad \big( \bigvee_{ n\ge 0} x^n \big) y= \bigvee_{ n\ge
    0} x^n y
\end{equation*}
for all $x, y\in S$.

A \emph{continuous Kleene algebra} is a Kleene algebra
$S=( S, \vee, \cdot, ^*, \bot, 1)$ in which \emph{all} suprema
$\bigvee X$, $X\subseteq S$, exist and are preserved by products,
\ie~$y( \bigvee X)= \bigvee yX$ and $( \bigvee X)y= \bigvee Xy$ for
all $X\subseteq S$, $y\in S$.  $^*$-continuous Kleene algebras are
hence a generalization of continuous Kleene algebras.  There are
interesting Kleene algebras which are $^*$-continuous but not
continuous, for example the Kleene algebra of all regular languages
over some alphabet.

A \emph{semiring-semimodule pair}~\cite{journals/sgf/EsikK07,
  book/BloomE93} $( S, V)$ consists of a semiring $S =
(S,+,\cdot,0,1)$ and a commutative monoid $V = (V,+,0)$ which is
equipped with a left $S$-action $S \times V \to V$, $(s,v) \mapsto
sv$, satisfying
\begin{alignat*}{2}
  (s+s')v   &= sv + s'v \qquad\qquad&
  s(v + v') &= sv + sv' \\
  (ss') v   &= s(s'v) &
  0s &= 0 \\
  s0 &= 0 &
  1v &= v
\end{alignat*}
for all $s,s'\in S$ and $v \in V$.  In that case, we also call $V$ a
\emph{(left) $S$-semimodule}.  If $S$ is idempotent, then also $V$ is
idempotent, so that we then write $V=(V,\vee,\bot)$.

A \emph{generalized $^*$-continuous Kleene
  algebra}~\cite{DBLP:conf/dlt/EsikFL15} is a semiring-semimodule pair $(
S, V)$ where $S=(S,\vee,\cdot, ^*,\bot,1)$ is a $^*$-continuous Kleene
algebra such that
\begin{equation*}
  x y^* v= \bigvee_{ n\ge 0} x y^n v
\end{equation*}
for all $x, y\in S$ and $v\in V$.

A \emph{$^*$-continuous Kleene
  $\omega$-algebra}~\cite{DBLP:conf/dlt/EsikFL15} consists of a generalized
$^*$-continuous Kleene algebra $( S, V)$ together with an infinite
product operation $S^\omega\to V$ which maps every infinite sequence
$x_0, x_1,\dotsc$ in $S$ to an element $\prod_{ n\ge 0} x_n$ of $V$.
The infinite product is subject to the following conditions:
\begin{itemize}
\item[\Ax1] For all $x_0, x_1,\dotsc\in S$, $\displaystyle \prod_{
    n\ge 0} x_n= x_0 \prod_{ n\ge 0} x_{ n+ 1}$.
\item[\Ax2] Let $x_0, x_1,\dotsc\in S$ and $0= n_0\le n_1\le\dotsm$ a
  sequence which increases without a bound. Let $y_k= x_{ n_k}\dotsm
  x_{ n_{ k+ 1}- 1}$ for all $k\ge 0$.  Then $\displaystyle \prod_{
    n\ge 0} x_n= \prod_{ k\ge 0} y_k$.
\item[\Ax3] For all $x_0, x_1,\dotsc, y, z\in S$, $\displaystyle
  \prod_{ n\ge 0}( x_n( y\vee z))= \adjustlimits \bigvee_{ x_0',
    x_1',\dotsc\in\{ y, z\}\;} \prod_{ n\ge 0} x_n x_n'$.
\item[\Ax4] For all $x, y_0, y_1,\dotsc\in S$, $\displaystyle \prod_{
    n\ge 0} x^* y_n= \adjustlimits \bigvee_{ k_0, k_1,\dotsc\ge 0\;}
  \prod_{ n\ge 0} x^{ k_n} y_n$.
\end{itemize}

A \emph{continuous Kleene
  $\omega$-algebra}~\cite{journals/sgf/EsikK07} is a
semiring-semimodule pair $( S, V)$ in which $S$ is a continuous Kleene
algebra, $V$ is a complete lattice, and the $S$-action on $V$
preserves all suprema in either argument, together with an infinite
product as above which satisfies conditions~\Ax1 and ~\Ax2 above and
preserves all suprema:
$\prod_{ n\ge 0}( \bigvee X_n)= \bigvee \{ \prod_{ n\ge 0} x_n\mid
x_n\in X_n, n\ge 0\}$
for all $X_0, X_1,\dotsc\subseteq S$ (this property implies~\Ax3
and~\Ax4 above).  $^*$-continuous Kleene $\omega$-algebras are hence a
generalization of continuous Kleene $\omega$-algebras.  We have
in~\cite{DBLP:conf/dlt/EsikFL15} given an example, based on regular
languages of finite and infinite words, of a $^*$-continuous Kleene
$\omega$-algebra which is not a continuous Kleene $\omega$-algebra.
In Section~\ref{se:energy2} we will show that energy functions give
raise to another such example.

\subsection{Matrix Semiring-Semimodule Pairs}

For any semiring $S$ and $n\ge 1$, we can form the matrix semiring
$S^{ n\times n}$ whose elements are $n\times n$-matrices of elements
of $S$ and whose sum and product are given as the usual matrix sum and
product.  It is known~\cite{DBLP:conf/mfcs/Kozen90} that when $S$ is a
$^*$-continuous Kleene algebra, then $S^{ n\times n}$ is also a
$^*$-continuous Kleene algebra, with the $^*$-operation defined by
\begin{equation*}
  M^*_{ i, j}= \adjustlimits \bigvee_{ m\ge 0\;} \bigvee_{ 1\le
    k_1,\dotsc, k_m\le n} M_{ i, k_1} M_{ k_1, k_2}\dotsm M_{ k_m, j}
\end{equation*}
for all $M\in S^{ n\times n}$ and $1\le i, j\le n$.  The above
infinite supremum exists, as it is taken over a regular set,
see~\cite[Thm.~9]{DBLP:journals/jucs/EsikK02}
and~\cite[Lemma~4]{DBLP:conf/dlt/EsikFL15}.  Also, if $n\ge 2$ and
$M= \left( \begin{smallmatrix} a & b \\ c & d \end{smallmatrix}
\right)$,
where $a$ and $d$ are square matrices of dimension less than $n$, then
\begin{equation}
  \label{eq:mstar}
  M^*= 
  \begin{pmatrix}
    ( a\vee b d^* c)^* & ( a\vee b d^* c)^* b d^* \\
    ( d\vee c a^* b)^* c a^* & ( d\vee c a^* b)^*
  \end{pmatrix}\,.
\end{equation}

For any semiring-semimodule pair $( S, V)$ and $n\ge 1$, we can form
the matrix semiring-semimodule pair $( S^{ n\times n}, V^n)$ whose
elements are $n\times n$-matrices of elements of $S$ and
$n$-dimensional (column) vectors of elements of $V$, with the action
of $S^{ n\times n}$ on $V^n$ given by the usual matrix-vector product.

When $( S, V)$ is a $^*$-continuous Kleene $\omega$-algebra, then
$( S^{ n\times n}, V^n)$ is a generalized $^*$-continuous Kleene
algebra~\cite{DBLP:conf/dlt/EsikFL15}.
By~\cite[Lemma~17]{DBLP:conf/dlt/EsikFL15}, there is an $\omega$-operation
on $S^{ n\times n}$ defined by
\begin{equation*}
  M^\omega_i= \bigvee_{1\le k_1,k_2,\dotsc\le n} M_{ i, k_1} M_{ k_1, k_2}\dotsm
\end{equation*}
for all $M\in S^{ n\times n}$ and $1\le i\le n$.  Also, if $n\ge 2$
and $M= \left( \begin{smallmatrix} a & b \\ c & d \end{smallmatrix}
\right)$, where $a$ and $d$ are square matrices of dimension less than
$n$, then
\begin{equation*}
  M^\omega= 
  \begin{pmatrix}
    ( a\vee b d^* c)^\omega\vee( a\vee b d^* c)^* b d^\omega \\
    ( d\vee c a^* b)^\omega\vee( d\vee c a^* b)^* c a^\omega
  \end{pmatrix}\,.
\end{equation*}

\subsection{Weighted automata}
\label{se:weightedaut}

Let $( S, V)$ be a $^*$-continuous Kleene $\omega$-algebra and
$A\subseteq S$ a subset.  We write $\langle A\rangle$ for the set of
all finite suprema $a_1\vee\dotsm\vee a_m$ with $a_i\in A$ for each
$i= 1,\dotsc, m$.

A \emph{weighted automaton}~\cite{inbook/EsikK09} over $A$ of
dimension $n\ge 1$ is a tuple $( \alpha, M, k)$, where
$\alpha\in\{ \bot, 1\}^n$ is the initial vector,
$M\in \langle A \rangle^{ n\times n}$ is the transition matrix, and
$k$ is an integer $0\le k\le n$.  Combinatorially, this may be
represented as a transition system whose set of states is
$\{ 1,\dotsc, n\}$.  For any pair of states $i, j$, the transitions
from $i$ to $j$ are determined by the entry $M_{ i, j}$ of the
transition matrix: if $M_{ i, j}= a_1\vee\dotsm\vee a_m$, then there
are $m$ transitions from $i$ to $j$, respectively labeled
$a_1,\dotsc, a_n$.  The states $i$ with $\alpha_i= 1$ are
\emph{initial}, and the states $\{ 1,\dotsc, k\}$ are
\emph{accepting}.

The \emph{finite behavior} of a weighted automaton $A=( \alpha, M, k)$
is defined to be
\begin{equation*}
  | A|= \alpha M^* \kappa\,,
\end{equation*}
where $\kappa\in\{ \bot, 1\}^n$ is the vector given by $\kappa_i= 1$
for $i\le k$ and $\kappa_i= \bot$ for $i> k$.  (Note that $\alpha$ has
to be used as a \emph{row} vector for this multiplication to make
sense.)  It is clear by~\eqref{eq:mstar} that $| A|$ is the supremum
of the products of the transition labels along all paths in $A$ from
any initial to any accepting state.

The \emph{B{\"u}chi behavior} of a weighted automaton
$A=( \alpha, M, k)$ is defined to be
\begin{equation*}
  \| A\|= \alpha
  \begin{pmatrix}
    ( a+ b d^* c)^\omega \\
    d^* c( a+ b d^* c)^\omega
  \end{pmatrix},
\end{equation*}
where $a\in \langle A\rangle^{ k\times k}$,
$b\in \langle A\rangle^{ k\times( n- k)}$,
$c\in \langle A\rangle^{( n- k)\times n}$ and
$d\in \langle A\rangle^{( n- k)\times( n- k)}$ are such that
$M= \left( \begin{smallmatrix} a & b \\ c &
    d \end{smallmatrix}\right)$.
By~\cite[Thm.~20]{DBLP:conf/dlt/EsikFL15}, $\| A\|$ is the supremum of the
products of the transition labels along all infinite paths in $A$ from
any initial state which infinitely often visit an accepting state.

\section{Generalized $^*$-continuous Kleene Algebras of Functions}
\label{se:finadd}

In the following two sections our aim is to establish properties which
ensure that semiring-semimodule pairs of functions form
$^*$-continuous Kleene $\omega$-algebras.  We will use these
properties in Section~\ref{se:energy2} to show that energy functions
form a $^*$-continuous Kleene $\omega$-algebra.

Let $L$ and $L'$ be complete lattices with bottom and top elements
$\bot$ and $\top$.  Then a function $f: L\to L'$ is said to be
\emph{finitely additive} if $\bot f = \bot$ and
$( x\vee y) f= x f\vee y f$ for all $x, y\in L$.  (Recall that we
write function application and composition in the diagrammatic order,
from left to right.)  When $f: L\to L'$ is finitely additive, then
$( \bigvee X) f= \bigvee X f$ for all finite sets $X \subseteq L$.

Consider the collection $\FinAdd_{ L, L'}$ of all finitely additive
functions $f: L\to L'$, ordered pointwise.  Since the (pointwise)
supremum of any set of finitely additive functions is finitely
additive, $\FinAdd_{ L, L'}$ is also a complete lattice, in which the
supremum of any set of functions can be constructed pointwise.  The
least and greatest elements are the constant functions with value
$\bot$ and $\top$, respectively.  By an abuse of notation, we will
denote these functions by $\bot$ and $\top$ as well.

\begin{definition}
  A function $f\in \FinAdd_{ L, L'}$ is said to be
  \emph{$\top$-continuous} if $f= \bot$ or for all $X\subseteq L$ with
  $\bigvee X= \top$, also $\bigvee X f= \top$.
\end{definition}

Note that if $f\ne \bot$ is $\top$-continuous, then $\top f= \top$.
The functions $\id$ and $\bot$ are $\top$-continuous.  Also, the
(pointwise) supremum of any set of $\top$-continuous functions is
again $\top$-continuous.

We will first be concerned with functions in $\FinAdd_{ L, L}$, which
we just denote $\FinAdd_L$.  Since the composition of finitely
additive functions is finitely additive and the identity function
$\id$ over $L$ is finitely additive, and since composition of finitely
additive functions distributes over finite suprema, $\FinAdd_L$,
equipped with the operation $\vee$ (binary supremum), $;$
(composition), and the constant function $\bot$ and the identity
function $\id$ as $1$, is an idempotent semiring.  It follows that
when $f$ is finitely additive, then so is $f^*= \bigvee_{ n\ge 0}
f^n$.  Moreover, $f\le f^*$ and $f^*\le g^*$ whenever $f\le g$.  Below
we will usually write just $fg$ for the composition $f;g$.

\begin{lemma}
  \label{le:rightcont}
  Let $S$ be any subsemiring of\/ $\FinAdd_L$ closed under the
  $^*$-operation. Then $S$ is a $^*$-continuous Kleene algebra iff for
  all $g, h\in S$, $g^* h= \bigvee_{ n\ge 0} g^n h$.
\end{lemma}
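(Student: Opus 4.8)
The plan is to verify the definition of $^*$-continuous Kleene algebra against the structure $S$, reducing everything to the single hypothesis $g^* h = \bigvee_{n\ge 0} g^n h$. The forward direction is immediate: if $S$ is $^*$-continuous, then by definition products preserve the suprema $\bigvee_{n\ge 0} g^n$, so in particular $g^* h = \big(\bigvee_{n\ge 0} g^n\big) h = \bigvee_{n\ge 0} g^n h$ for all $g, h \in S$. For the converse, assume $g^* h = \bigvee_{n\ge 0} g^n h$ for all $g, h \in S$. Since $S$ is already a subsemiring closed under $^*$, and since we already know (from the discussion before the lemma) that $f \le f^*$ and that $f^* = \bigvee_{n\ge 0} f^n$ holds \emph{in} $\FinAdd_L$, I first need to check that $f^*$ as computed in $\FinAdd_L$ is actually the star operation making $S$ a Kleene algebra — i.e.\ that $g h^*$ and $h^* g$ are the least solutions of the appropriate fixed-point equations in $S$. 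For this one shows $g h^*$ solves $z = z h \vee g$ (using $h^* = 1 \vee h^* h$, which follows from $h^* = \bigvee_{n} h^n$) and that it is the least such solution (a standard induction: if $z h \vee g \le z$ then $g h^n \le z$ for all $n$, so $g h^* = \bigvee_n g h^n \le z$, where the last equality is exactly a special case of the hypothesis).

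Next I would establish the two continuity identities $y\big(\bigvee_{n\ge 0} x^n\big) = \bigvee_{n\ge 0} y x^n$ and $\big(\bigvee_{n\ge 0} x^n\big) y = \bigvee_{n\ge 0} x^n y$ for all $x, y \in S$. Writing $x^* = \bigvee_{n\ge 0} x^n$, the first of these is precisely the hypothesis $x^* y = \bigvee_n x^n y$ — wait, I need to be careful about which side is which: since we write composition left to right, $y x^*$ means "first $y$, then $x^*$", so the identity $y(\bigvee_n x^n) = \bigvee_n y x^n$ is the statement that post-composing with $x^*$ distributes, which is the hypothesis $g^* h$-style identity is \emph{not} quite — let me restate. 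The hypothesis gives $g^* h = \bigvee_n g^n h$; in diagrammatic order this is "$g^*$ then $h$" $=$ "$\bigvee_n$ ($g^n$ then $h$)", i.e.\ $\big(\bigvee_n g^n\big) h = \bigvee_n g^n h$, which is exactly the \emph{right} continuity law $(\bigvee_n x^n) y = \bigvee_n x^n y$. The \emph{left} law $y(\bigvee_n x^n) = \bigvee_n y x^n$ must be obtained separately; but this one is essentially automatic, because every $f \in \FinAdd_L$ is finitely additive and hence, when we ask whether $f$ preserves the particular supremum $\bigvee_n x^n$ under \emph{pre}-composition, we are asking whether $\big(\bigvee_n x^n\big) \circ f$ — no. Pre-composing by $y$ means forming $y \circ (\text{stuff})$ in diagrammatic terms, i.e.\ $v \mapsto (vy)(\bigvee_n x^n)$, and since $\FinAdd_L$ computes suprema pointwise and $\bigvee_n x^n$ acts pointwise as a supremum, $v\big(y \bigvee_n x^n\big) = (vy)\big(\bigvee_n x^n\big) = \bigvee_n (vy) x^n = \bigvee_n v(y x^n) = v\big(\bigvee_n y x^n\big)$, using only that suprema in $\FinAdd_L$ are pointwise. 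So the left law holds for free, and only the right law needs the hypothesis.

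The remaining obligations are that the infinite suprema $\bigvee_n x^n$ exist in $S$ — which holds because $S$ is closed under $^*$ and $x^* = \bigvee_n x^n$ — and that $x^* = \bigvee_n x^n$ in $S$, which is true for the same reason (suprema in $S$ agree with those in $\FinAdd_L$ for sets that have a supremum in $S$, since $S$ is a subsemiring and the $^*$ we use is the one inherited from $\FinAdd_L$). I expect the main obstacle to be purely bookkeeping: keeping the diagrammatic composition order straight so that it is clear the hypothesis $g^* h = \bigvee_n g^n h$ is the \emph{right}-distributivity law and that \emph{left}-distributivity comes for free from pointwise suprema in $\FinAdd_L$ — and making sure that when we restrict from $\FinAdd_L$ to the subsemiring $S$, no supremum that we need is lost (it is not, precisely because the relevant suprema are realized by $^*$, under which $S$ is assumed closed). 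Once these order conventions are pinned down, each verification is a one- or two-line induction or an appeal to the pointwise construction of suprema.
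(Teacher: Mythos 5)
Your proposal is correct and takes essentially the same approach as the paper: the hypothesis supplies the right-hand law $(\bigvee_{n} g^n)h=\bigvee_{n} g^n h$, and the left-hand law $f(\bigvee_n g^n h)=\bigvee_n f g^n h$ comes for free because suprema in $\FinAdd_L$ are pointwise, which is exactly the paper's two-line argument (the paper leaves the trivial forward direction and the Kleene-algebra axioms implicit). One small mislabel: in your leastness argument, $g h^* = \bigvee_n g h^n$ is an instance of the free left-hand (pointwise) law, not of the hypothesis $g^*h=\bigvee_n g^n h$ — but since you correctly derive that law later, nothing breaks.
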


\begin{proof}
  Suppose that the above condition holds.  We need to show that $f(
  \bigvee_{ n\ge 0} g^n) h= \bigvee_{ n\ge 0} f g^n h$ for all $f, g,
  h\in S$.  But $f( \bigvee_{ n\ge 0} g^n) h= f( \bigvee_{ n\ge 0} g^n
  h)$ by assumption, and we conclude that $f( \bigvee_{ n\ge 0} g^n
  h)= \bigvee_{ n\ge 0} f g^n h$ since the supremum is pointwise.
\end{proof}

Compositions of $\top$-continuous functions in $\FinAdd_L$ are again
$\top$-continuous, so that the collection of all $\top$-continuous
functions in $\FinAdd_L$ is itself an idempotent semiring.

\begin{definition}
  A function $f\in \FinAdd_L$ is said to be \emph{locally $^*$-closed}
  if for each $x\in L$, either $x f^*= \top$ or there exists $N\ge 0$
  such that $x f^*= x\vee\dotsm\vee x f^N$.
\end{definition}

The functions $\id$ and $\bot$ are locally $^*$-closed.  As the next
example demonstrates, compositions of locally $^*$-closed (and
$\top$-continuous) functions are not necessarily locally $^*$-closed.

\begin{example}
  Let $L$ be the following complete lattice (the linear sum of three
  infinite chains):
  \begin{equation*}
    \bot< x_0< x_1<\dotsm< y_0< y_1<\dotsm< z_0< z_1<\dotsm< \top
  \end{equation*}
  Since $L$ is a chain, a function $L\to L$ is finitely additive iff
  it is monotone and preserves $\bot$.

  Let $f, g: L\to L$ be the following functions.  First, $\bot f= \bot
  g= \bot$ and $\top f =\top g= \top$.  Moreover, $x_i f= y_i$, $y_i
  f= z_i g= \top$ and $x_i g =\bot$, $y_i g= x_{i+1}$, and $z_i g=
  \top$ for all $i$.  Then $f, g$ are monotone, $u f^* =u\vee u f\vee
  u f^2$ and $u g^*= u\vee u g$ for all $u\in L$.  Also, $f$ and $g$
  are $\top$-continuous, since if $\bigvee X= \top$ then either $\top
  \in X$ or $X\cap\{ z_0, z_1,\dotsc\}$ is infinite, but then $\bigvee
  X f= \bigvee X g= \top$.  However, $f g$ is not locally $^*$-closed,
  since $x_0( f g)^*= x_0\vee x_0( f g)\vee x_0( f g)^2\dotsm= x_0\vee
  x_1\vee\dotsm= y_0$.
\end{example}

\begin{lemma}
  \label{le:starstar}
  Let $f\in \FinAdd_L$ be locally $^*$-closed.  Then also $f^*$ is
  locally $^*$-closed.  If $f$ is additionally $\top$-continuous, then
  so is $f^*$.
\end{lemma}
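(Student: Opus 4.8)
The plan is to first prove that $f^*$ is idempotent under the star operation, that is, $( f^*)^*= f^*$ as elements of $\FinAdd_L$, and then to read off both claims almost immediately.

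For idempotency, fix $x\in L$. The inequality $x f^*\le x( f^*)^*$ is trivial, since $x( f^*)^*\ge x( f^*)^1= x f^*$, so it remains to prove $x( f^*)^*\le x f^*$. Because $f$ is locally $^*$-closed there are two cases. If $x f^*= \top$, then already $x( f^*)^*\ge x f^*= \top$ and we are done. Otherwise $x f^*= x\vee x f\vee\dots\vee x f^N$ for some $N\ge 0$, a \emph{finite} join. I would then show by induction on $m\ge 0$ that $x( f^*)^m\le x f^*$: the base case $m= 0$ is $x\le x f^*$, and in the induction step one combines the induction hypothesis with monotonicity and finite additivity of $f^*$ to get $x( f^*)^{ m+ 1}= \big( x( f^*)^m\big) f^*\le ( x f^*) f^*= \bigvee_{ i= 0}^N x f^i f^*$, where each summand satisfies $x f^i f^*= \bigvee_{ k\ge i} x f^k\le x f^*$. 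The key point — and the only place where any care is needed — is that in the non-$\top$ case $x f^*$ is a \emph{finite} supremum, so finite additivity of $f^*$ already suffices to distribute it over the join; no continuity of $f^*$ is required. Taking the supremum over $m$ yields $x( f^*)^*\le x f^*$, hence $x( f^*)^*= x f^*$.

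Granting $( f^*)^*= f^*$, local $^*$-closedness of $f^*$ follows at once: for every $x\in L$ we have $x( f^*)^*= x f^*= x\vee x f^*= x( f^*)^0\vee x( f^*)^1$, which is precisely the finite-join form demanded by the definition (with $N= 1$), regardless of whether $x f^*$ equals $\top$.

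For the $\top$-continuity claim I would first dispose of the degenerate case: if $f= \bot$ then $f^*= \id$, which is $\top$-continuous by the remark following the definition of $\top$-continuity. If $f\ne\bot$, take any $X\subseteq L$ with $\bigvee X= \top$; then $\bigvee X f= \top$ because $f$ is $\top$-continuous, and since $f\le f^*$ we get $\bigvee X f^*\ge \bigvee X f= \top$, whence $\bigvee X f^*= \top$. Thus $f^*$ is $\top$-continuous. I expect the induction in the first step to be the only genuine (albeit routine) work; everything else is a direct consequence of $f\le f^*$ together with finite additivity.
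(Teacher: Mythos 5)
Your proof is correct and follows essentially the same route as the paper's: in the non-$\top$ case you exploit that $x f^*$ is a finite join so that finite additivity of $f^*$ gives $x f^* f^*\le x f^*$, conclude $x( f^*)^*= x f^*$, and read off local $^*$-closedness; your explicit induction on $m$ just spells out a step the paper leaves implicit. The $\top$-continuity part differs only trivially (the paper uses $\id\le f^*$ where you use $f\le f^*$), and both are fine.
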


\begin{proof}
  We prove that $x f^{ **}= x\vee x f^*= x f^*$ for all $x\in L$.
  Indeed, this is clear when $x f^*= \top$, since $f^*\le f^{ **}$.
  Otherwise $x f^*= \bigvee_{ k\le n} x f^k$ for some $n\ge 0$.

  By finite additivity, it follows that
  $x f^* f^*= \bigvee_{ k\le n} x f^k f^*$.  But for each $k$,
  $x f^k f^*= x f^k\vee x f^{k+1}\vee\dotsm\le x f^*$, thus
  $x f^*= x f^* f^*$ and $x f^*=x f^{ **}$.  It follows that $f^*$ is
  locally $^*$-closed.

  Suppose now that $f$ is additionally $\top$-continuous.  We need to
  show that $f^*$ is also $\top$-continuous.  To this end, let
  $X\subseteq L$ with $\bigvee X= \top$.  Since $x\le x f^*$ for all
  $x\in X$, it holds that $\bigvee X f^*\ge \bigvee X= \top$.  Thus
  $\bigvee X f^* = \top$.
\end{proof}

\begin{proposition}
  \label{pr:lclotop}
  Let $S$ be any subsemiring of\/ $\FinAdd_L$ closed under the
  $^*$-operation.  If each $f\in S$ is locally $^*$-closed and
  $\top$-continuous, then $S$ is a $^*$-continuous Kleene algebra.
\end{proposition}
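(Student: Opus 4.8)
The plan is to reduce to Lemma~\ref{le:rightcont}. Since $S$ is a subsemiring of $\FinAdd_L$ closed under $^*$, it is a $^*$-continuous Kleene algebra as soon as $g^* h= \bigvee_{n\ge 0} g^n h$ for all $g, h\in S$. Because suprema in $\FinAdd_L$ are computed pointwise (and $f^*=\bigvee_{n\ge 0} f^n$ by definition), this is in turn equivalent to $x g^* h= \bigvee_{n\ge 0} x g^n h$ for every $x\in L$. I would prove this last identity by a case distinction on whether $x g^*=\top$, using local $^*$-closedness of $g$ in one case and $\top$-continuity of $h$ in the other.

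For the groundwork, note first that every finitely additive function is monotone: if $x\le y$ then $y f=(x\vee y)f= x f\vee y f\ge x f$. In particular $g^n\le g^*=\bigvee_m g^m$, hence $x g^n h\le x g^* h$ for all $n$, so $\bigvee_{n\ge 0} x g^n h\le x g^* h$ holds unconditionally; only the reverse inequality is at issue. \emph{Case $x g^*\ne\top$.} By local $^*$-closedness of $g$ there is $N\ge 0$ with $x g^*= x\vee x g\vee\dotsm\vee x g^N$. Applying finite additivity of $h$ to this finite supremum gives $x g^* h= x h\vee x g h\vee\dotsm\vee x g^N h\le\bigvee_{n\ge 0} x g^n h$, which settles this case. \emph{Case $x g^*=\top$.} If $h=\bot$ then both sides equal $\bot$. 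Otherwise $h$ is $\top$-continuous with $\top h=\top$; applying $\top$-continuity of $h$ to the set $X=\{x g^n\mid n\ge 0\}$, whose supremum is $x g^*=\top$, yields $\bigvee_{n\ge 0} x g^n h=\bigvee X h=\top=\top h= x g^* h$. Combining the two cases gives the required identity, and Lemma~\ref{le:rightcont} completes the argument.

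I do not expect a genuine obstacle: the statement essentially packages the two preliminary notions so that each does exactly one job. The only points needing a little care are that local $^*$-closedness is invoked for $g$ alone while $\top$-continuity is invoked for $h$ alone, and that the escape clause $h=\bot$ in the definition of $\top$-continuity must be dispatched separately in the $x g^*=\top$ case. One should also observe that the hypothesis ``closed under $^*$'' is used only to apply Lemma~\ref{le:rightcont}; the pointwise identity itself needs nothing beyond the two structural properties of the members of $S$.
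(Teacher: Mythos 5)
Your proof is correct and follows essentially the same route as the paper: reduce via Lemma~\ref{le:rightcont} to the identity $g^*h=\bigvee_{n\ge 0}g^nh$, check it pointwise, and split on whether $x g^*=\top$ (using $\top$-continuity of $h$ on the set $\{x g^n\mid n\ge 0\}$) or not (using local $^*$-closedness of $g$ plus finite additivity of $h$). The only cosmetic difference is that the paper dispatches the $h=\bot$ case once at the outset rather than inside the $x g^*=\top$ case.
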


\begin{proof}
  Suppose that $g, h\in S$. By Lemma~\ref{le:rightcont}, it suffices
  to show that $g^* h= \bigvee_{ n\ge 0} g^n h$.  Since this is clear
  when $h= \bot$, assume that $h\ne \bot$.  As $g^n h\le g^* h$ for
  all $n\ge 0$, it holds that $\bigvee_{ n\ge 0} g^n h\le g^* h$. To
  prove the opposite inequality, suppose that $x\in L$. If
  $x g^*= \top$, then $\bigvee_{ n\ge 0} x g^n= \top$, so
  $\bigvee_{ n\ge 0} x g^n h= \top$ by $\top$-continuity.  Thus,
  $x g^* h= \top= \bigvee_{ n\ge 0} x g^n h$.

  Suppose that $x g^*\ne \top$.  Then there is $m\ge 0$ with
  \begin{equation*}
    x g^* h= ( x\vee\dotsm\vee x g^m) h= x h\vee\dotsm\vee x g^m h\le
    \bigvee_{ n\ge 0} x g^n h= x( \bigvee_{ n\ge 0} g^n h)\,.
  \end{equation*}
\end{proof}

Now define a left action of $\FinAdd_L$ on $\FinAdd_{ L, L'}$ by $f v=
f; v$, for all $f\in \FinAdd_L$ and $v\in \FinAdd_{ L, L'}$.  It is a
routine matter to check that $\FinAdd_{ L, L'}$, equipped with the
above action, the binary supremum operation $\vee$ and the constant
$\bot$ is an (idempotent) left $\FinAdd_L$-semimodule, that is, $(
\FinAdd_L, \FinAdd_{ L, L'})$ is a semiring-semimodule pair.

\begin{lemma}
  Let $S\subseteq \FinAdd_L$ be a $^*$-continuous Kleene algebra and
  $V\subseteq \FinAdd_{ L, L'}$ an $S$-semimodule.  Then $( S, V)$ is
  a generalized $^*$-continuous Kleene algebra iff for all $f\in S$
  and $v\in V$, $f^* v= \bigvee_{ n\ge 0} f^n v$.
\end{lemma}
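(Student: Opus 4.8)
The plan is to treat this as the semimodule counterpart of Lemma~\ref{le:rightcont} and to prove it by the same two-step \emph{iff} argument.

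For the ``only if'' direction I would simply specialize the defining identity of a generalized $^*$-continuous Kleene algebra: if $(S,V)$ is one, then $xy^*v=\bigvee_{n\ge0}xy^nv$ for all $x,y\in S$ and $v\in V$, and taking $x$ to be the multiplicative unit $1=\id\in S$ together with the semimodule law $1v=v$ gives $f^*v=\bigvee_{n\ge0}f^nv$ for all $f\in S$, $v\in V$.

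For the ``if'' direction, assume $f^*v=\bigvee_{n\ge0}f^nv$ for all $f\in S$, $v\in V$. Since $(S,V)$ is a semiring-semimodule pair and $S$ is a $^*$-continuous Kleene algebra by hypothesis, the only thing left to establish is the three-variable identity $xy^*v=\bigvee_{n\ge0}xy^nv$, and I would argue exactly as in the proof of Lemma~\ref{le:rightcont}. Using the hypothesis for $f=y$, $xy^*v=x\bigl(\bigvee_{n\ge0}y^nv\bigr)$, where the supremum is the one in the complete lattice $\FinAdd_{L,L'}$, hence computed pointwise. The left action of $x$ on $V$ is precomposition with the fixed function $x\colon L\to L$, i.e.\ $v\mapsto x;v$, and such precomposition preserves pointwise suprema: for any family $(w_i)$ in $\FinAdd_{L,L'}$ and any $p\in L$, $p\bigl(x;\bigvee_i w_i\bigr)=(px)\bigl(\bigvee_i w_i\bigr)=\bigvee_i(px)w_i=\bigvee_i p(x;w_i)$. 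Applying this with $w_n=y^nv$ yields $x\bigl(\bigvee_{n\ge0}y^nv\bigr)=\bigvee_{n\ge0}x(y^nv)=\bigvee_{n\ge0}xy^nv$, which finishes the argument.

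I do not expect a genuine obstacle here: the statement is the exact analogue of Lemma~\ref{le:rightcont} for the semimodule action, and the single nontrivial point is the observation that suprema in $\FinAdd_{L,L'}$ are pointwise and are preserved by left composition with a fixed function. The easy inequality $\bigvee_{n\ge0}xy^nv\le xy^*v$ and the degenerate case $v=\bot$ need no separate treatment, being subsumed by the equational chain above; and if one wants to be scrupulous about the suprema living in $V$ rather than merely in $\FinAdd_{L,L'}$, it suffices to note that $xy^*v\in V$ equals the pointwise supremum $\bigvee_{n\ge0}xy^nv$, so that supremum is realized inside $V$.
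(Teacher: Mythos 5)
Your proof is correct and follows essentially the same route as the paper, which simply says the argument is ``similar to the proof of Lemma~\ref{le:rightcont}'': the only-if direction specializes $xy^*v=\bigvee_{n\ge0}xy^nv$ at $x=1$, and the if direction rewrites $xy^*v=x(\bigvee_{n\ge0}y^nv)$ and uses that suprema in $\FinAdd_{L,L'}$ are pointwise and hence preserved by precomposition with the fixed function $x$. Your explicit remarks about the supremum being realized inside $V$ and the degenerate cases are fine but not needed beyond what the paper's one-line proof already implies.
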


\begin{proof}
  Similar to the proof of Lemma~\ref{le:rightcont}
\end{proof}

\begin{proposition}
  \label{pr:genstarkle}
  Let $S\subseteq \FinAdd_L$ be a $^*$-continuous Kleene algebra and
  $V\subseteq \FinAdd_{ L, L'}$ an $S$-semimodule.  If each $f\in S$
  is locally $^*$-closed and $\top$-continuous and each $v\in V$ is
  $\top$-continuous, then $( S, V)$ is a generalized $^*$-continuous
  Kleene algebra.
\end{proposition}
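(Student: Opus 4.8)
The plan is to apply the lemma immediately preceding this proposition, which reduces the claim to verifying that $f^* v= \bigvee_{ n\ge 0} f^n v$ for all $f\in S$ and $v\in V$; this is precisely the semimodule analogue of Proposition~\ref{pr:lclotop}, and I expect the argument to run along the same lines. One inequality is easy: $f^n\le f^*$ holds in $S$, and since every finitely additive function is monotone the $S$-action is monotone in its first argument, so $f^n v\le f^* v$ for all $n$ and hence $\bigvee_{ n\ge 0} f^n v\le f^* v$. If $v= \bot$ the claim is trivial, so I may assume $v\ne \bot$, whence $\top v= \top$ by $\top$-continuity of $v$.

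For the reverse inequality it suffices, since suprema in $\FinAdd_{ L, L'}$ are computed pointwise and $f^*= \bigvee_{ n\ge 0} f^n$ pointwise (so that $x f^*= \bigvee_{ n\ge 0} x f^n$ for every $x\in L$), to show $x( f^* v)\le x\bigl( \bigvee_{ n\ge 0} f^n v\bigr)$ for each $x\in L$. I would split according to the value of $x f^*$, using that $f$ is locally $^*$-closed. If $x f^*= \top$, then $\bigvee_{ n\ge 0} x f^n= \top$, and applying $\top$-continuity of $v$ to the set $\{ x f^n\mid n\ge 0\}$ gives $\bigvee_{ n\ge 0} x( f^n v)= \bigvee_{ n\ge 0} ( x f^n) v= \top= \top v= ( x f^*) v= x( f^* v)$. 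If instead $x f^*\ne \top$, local $^*$-closedness yields an $m\ge 0$ with $x f^*= x\vee\dotsm\vee x f^m$, and finite additivity of $v$ then gives $x( f^* v)= x v\vee\dotsm\vee x( f^m v)\le \bigvee_{ n\ge 0} x( f^n v)= x\bigl( \bigvee_{ n\ge 0} f^n v\bigr)$. Combining the two cases finishes the verification.

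The monotonicity observations and the reduction through the preceding lemma are routine; the one place where the hypothesis on $V$ is genuinely used is the case $x f^*= \top$, where $\top$-continuity of $v$ is exactly what transports the equality $\bigvee_{ n\ge 0} x f^n= \top$ across $v$ (just as $\top$-continuity of $h$ was used in Proposition~\ref{pr:lclotop}). I do not anticipate a real obstacle beyond the bookkeeping of checking that $\id= 1\in S$, so that $f^0 v= v$ makes sense, and that each $f^n v$ again lies in $V$, both of which hold since $S$ is a subsemiring of $\FinAdd_L$ and $V$ an $S$-semimodule.
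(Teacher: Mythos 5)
Your proof is correct and takes exactly the route the paper intends: the paper's own proof is just the remark ``similar to the proof of Proposition~\ref{pr:lclotop}'', and your argument is precisely that adaptation --- reduce via the preceding lemma to $f^*v=\bigvee_{n\ge 0}f^nv$, then split pointwise on whether $xf^*=\top$ (using $\top$-continuity of $v$) or $xf^*$ is a finite supremum (using local $^*$-closedness of $f$ and finite additivity of $v$). Nothing to correct.
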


\begin{proof}
  Similar to the proof of Proposition~\ref{pr:lclotop}.
\end{proof}

\section{$^*$-continuous Kleene $\omega$-Algebras of Functions}
\label{se:finaddomega}

In this section, let $L$ be an arbitrary complete lattice and $L'=
\two$, the $2$-element lattice $\{ \bot, \top\}$.  We define an
infinite product $\FinAdd_L^\omega\to \FinAdd_{ L, \two}$.  Let $f_0,
f_1,\dotsc\in \FinAdd_L$ be an infinite sequence and define $v=
\prod_{n\ge 0} f_n: L\to \two$ by
\begin{equation*}
  x v=
  \begin{cases}
    \bot &\text{if there is $n\ge 0$ such that $x f_0\dotsm f_n=
      \bot$}, \\
    \top &\text{otherwise}
  \end{cases}
\end{equation*}
for all $x\in L$.  We will write $\prod_{ n\ge k} f_n$, for $k\ge 0$,
as a shorthand for $\prod_{ n\ge 0} f_{ n+ k}$.

It is easy to see that $\prod_{ n\ge 0} f_n$ is finitely additive.
Indeed, $\bot \prod_{ n\ge 0} f_n= \bot$ clearly holds, and for all
$x\le y\in L$, $x \prod_{ n\ge 0} f_n\le y \prod_{ n\ge 0} f_n$.
Thus, to prove that
$( x\vee y) \prod_{ n\ge 0} f_n= x \prod_{ n\ge 0} f_n\vee y \prod_{
  n\ge 0} f_n$
for all $x, y\in L$, it suffices to show that if
$x \prod_{ n\ge 0} f_n= y \prod_{ n\ge 0} f_n= \bot$, then
$( x\vee y) \prod_{ n\ge 0} f_n= \bot$.  But if
$x \prod_{ n\ge 0} f_n= y \prod_{ n\ge 0} f_n= \bot$, then there exist
$m, k\ge 0$ such that $x f_0\dotsm f_m= y f_0\dotsm f_k= \bot$. Let
$n = \max\{ m, k\}$. We have
$( x\vee y) f_0\dotsm f_n= x f_0\dotsm f_n\vee y f_0\dotsm f_n= \bot$,
and thus $( x\vee y) \prod_{ n\ge 0} f_n = \bot$.

It is clear that this infinite product satisfies conditions~\Ax1
and~\Ax2 in the definition of $^*$-continuous Kleene
$\omega$-algebra.  Below we show that also~\Ax3 and~\Ax4 hold.

\begin{lemma}
  \label{le:ax3}
  For all $f_0, f_1,\dotsc, g_0, g_1,\dotsc\in \FinAdd_L$,
  \begin{equation*}
    \prod_{ n\ge 0}( f_n\vee g_n)= \bigvee_{ h_n\in\{ f_n, g_n\}}
    \prod_{ n\ge 0} h_n\,.
  \end{equation*}
\end{lemma}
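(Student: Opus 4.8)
The plan is to prove the two inequalities separately. For the direction $\ge$, I would first record that the infinite product is monotone in each argument: if $h_n\le k_n$ in $\FinAdd_L$ for all $n$, then $\prod_{n\ge 0} h_n\le \prod_{n\ge 0} k_n$. This is immediate from the definition, since by monotonicity and finite additivity of the $k_i$ we have $x\,h_0\dotsm h_n\le x\,k_0\dotsm k_n$ for every $x$ and $n$, so $x\prod_n h_n=\top$ forces $x\prod_n k_n=\top$. Applying this with $k_n=f_n\vee g_n\ge h_n$ for every choice $h_n\in\{f_n,g_n\}$ gives $\prod_n h_n\le\prod_n(f_n\vee g_n)$, and since suprema in $\FinAdd_{L,\two}$ are computed pointwise, taking the supremum over all such choices yields $\bigvee_{h_n}\prod_n h_n\le\prod_n(f_n\vee g_n)$.

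For the reverse inequality, it suffices (again because suprema are pointwise) to fix $x\in L$ with $x\prod_{n\ge 0}(f_n\vee g_n)=\top$ and exhibit a choice sequence $(h_n)_{n\ge 0}$, $h_n\in\{f_n,g_n\}$, with $x\prod_n h_n=\top$. Using the distributive law in the semiring $\FinAdd_L$ (iterated), for every $N\ge 0$ we have the identity
\begin{equation*}
  x(f_0\vee g_0)\dotsm(f_N\vee g_N)= \bigvee_{h_0\in\{f_0,g_0\},\dotsc,h_N\in\{f_N,g_N\}} x\,h_0\dotsm h_N\,,
\end{equation*}
which by hypothesis is $\ne\bot$; hence the set $C_N$ of finite tuples $(h_0,\dotsc,h_N)$ with $x\,h_0\dotsm h_N\ne\bot$ is nonempty. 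Moreover $C_\bullet$ is prefix-closed: if $(h_0,\dotsc,h_N)\in C_N$ then $(h_0,\dotsc,h_{N-1})\in C_{N-1}$, since $x\,h_0\dotsm h_{N-1}=\bot$ would force $x\,h_0\dotsm h_{N-1}h_N=\bot h_N=\bot$ by finite additivity of $h_N$.

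These two facts say exactly that the tuples in $\bigcup_{N\ge 0} C_N$, together with the empty tuple and ordered by the prefix relation, form an infinite, finitely branching tree (branching degree at most $2$). By K{\"o}nig's Lemma this tree has an infinite branch $(h_0,h_1,h_2,\dotsc)$; every finite prefix $(h_0,\dotsc,h_N)$ then lies in $C_N$, i.e.\ $x\,h_0\dotsm h_N\ne\bot$ for all $N$, which by definition of the infinite product means $x\prod_n h_n=\top$. Since $\prod_n h_n$ occurs among the functions whose pointwise supremum is taken on the right, $x\bigl(\bigvee_{k_n}\prod_n k_n\bigr)\ge x\prod_n h_n=\top$, completing the argument. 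The only nonroutine ingredient is the K{\"o}nig's Lemma step, and the small verification that nonemptiness of the $C_N$ together with $\bot h=\bot$ yield a genuine infinite finitely branching tree; the rest is bookkeeping with the distributive law and the pointwise description of suprema.
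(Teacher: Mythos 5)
Your proposal is correct and follows essentially the same route as the paper: monotonicity of the infinite product for one inequality, and for the other a K\H{o}nig's Lemma argument on the tree of finite choice sequences $(h_0,\dotsc,h_N)$ with $x\,h_0\dotsm h_N\ne\bot$ (the paper phrases this as marking nodes of the full binary tree, which is the same tree). Your explicit justification via the iterated distributive law that each level of the tree is nonempty is exactly the step the paper leaves implicit.
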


\begin{proof}
  Since infinite product is monotone, the term on the right-hand side
  of the equation is less than or equal to the term on the left-hand
  side.  To prove that equality holds, let $x\in L$ and suppose that
  $x \prod_{ n\ge 0}( f_n\vee g_n)= \top$.  It suffices to show that
  there is a choice of the functions $h_n\in\{ f_n, g_n\}$ such that
  $x \prod_{ n\ge 0} h_n= \top$.

  Consider the infinite ordered binary tree where each node at level
  $n\ge 0$ is the source of an edge labeled $f_n$ and an edge labeled
  $g_n$, ordered as indicated.  We can assign to each node $u$ the
  composition $h_u$ of the functions that occur as the labels of the
  edges along the unique path from the root to that node.

  Let us mark a node $u$ if $x h_u\ne \bot$.  As
  $x \prod_{ n\ge 0}( f_n\vee g_n)= \top $, each level contains a
  marked node.  Moreover, whenever a node is marked and has a
  predecessor, its predecessor is also marked.  By K{\"o}nig's
  lemma~\cite{Konig27} there is an infinite path going through marked
  nodes.  This infinite path gives rise to the sequence
  $h_0, h_1,\dotsc$ with $x \prod_{ n\ge 0} h_n= \top$.
\end{proof}

\begin{lemma}
  \label{le:ax4-help}
  Let $f\in \FinAdd_L$ and $v\in \FinAdd_{ L, \two}$ such that $f$ is
  locally $^*$-closed and $v$ is $\top$-continuous.  If $x f^* v=
  \top$, then there exists $k\ge 0$ such that $x f^k v= \top$.
\end{lemma}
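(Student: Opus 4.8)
The plan is to split on whether $x f^{*}$ equals $\top$ in $L$, exploiting that $f$ is locally $^{*}$-closed. As a preliminary observation, $v\ne\bot$: if $v$ were the constant $\bot$ function we would have $x f^{*}v=\bot$, contradicting the hypothesis. I will also use that $f^{*}=\bigvee_{n\ge 0}f^{n}$ with the supremum formed pointwise, so that $x f^{*}=\bigvee_{n\ge 0}x f^{n}$ and hence $x f^{*}v=\bigl(\bigvee_{n\ge 0}x f^{n}\bigr)v$ by associativity of composition.

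\emph{Case 1: $x f^{*}\ne\top$.} Local $^{*}$-closedness then provides some $N\ge 0$ with $x f^{*}=x\vee x f\vee\dots\vee x f^{N}$, a \emph{finite} supremum. Since $v$ is finitely additive, $x f^{*}v=x v\vee x f v\vee\dots\vee x f^{N}v$. Because $\two=\{\bot,\top\}$, a supremum of elements of $\two$ equals $\top$ iff one of those elements does; as the left-hand side is $\top$, some $k\le N$ satisfies $x f^{k}v=\top$, which is what we want.

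\emph{Case 2: $x f^{*}=\top$,} i.e.\ $\bigvee_{n\ge 0}x f^{n}=\top$. Here I apply $\top$-continuity of $v$ (legitimate since $v\ne\bot$) to the set $X=\{x f^{n}\mid n\ge 0\}$, which has $\bigvee X=\top$; this gives $\bigvee_{n\ge 0}x f^{n}v=\bigvee X v=\top$. Again, a supremum in $\two$ is $\top$ only if $\top$ occurs among the joined elements, so $x f^{k}v=\top$ for some $k\ge 0$.

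I do not expect a genuine obstacle: the argument is short once the case split is set up. The one point to be careful about — and the reason the hypotheses are exactly these — is that $v$ is merely finitely additive and $\top$-continuous, so one cannot push $v$ through the infinite join $\bigvee_{n}x f^{n}$ directly unless that join already equals $\top$ (Case 2, where $\top$-continuity applies) or collapses to a finite one (Case 1, by local $^{*}$-closedness, where finite additivity suffices). In both cases the final step is the elementary fact that in $\two$ a join is $\top$ precisely when one of the joined elements is $\top$. This lemma will subsequently feed into the verification of condition~\Ax4.
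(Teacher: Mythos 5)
Your proof is correct and follows essentially the same route as the paper's: the same case split on whether $x f^{*}$ is $\top$ or collapses to a finite join by local $^{*}$-closedness, using finite additivity of $v$ in the first case and $\top$-continuity in the second, and concluding via the fact that a join in $\two$ is $\top$ only if one of its elements is. Your explicit preliminary remark that $v\ne\bot$ (needed to invoke $\top$-continuity as defined) is a small point the paper leaves implicit, but otherwise the arguments coincide.
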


\begin{proof}
  If $x f^*= \bigvee_{ n= 0}^N x f^n$ for some $N\ge 0$, then
  $x f^* v= \bigvee_{ n= 0}^N x f^n v= \top$ implies the claim of the
  lemma.  If $x f^*= \top$, then $\top$-continuity of $v$ implies that
  $\bigvee_{ n\ge 0} x f^n v= \top$, which again implies the claim.
\end{proof}

\begin{lemma}
  \label{le:ax4}
  Let $f, g_0, g_1,\dotsc\in \FinAdd_L$ be locally $^*$-closed and
  $\top$-continuous such that for each $m\ge 0$,
  $g_m \prod_{ n\ge m+ 1} f^* g_n\in \FinAdd_{ L, \two}$ is
  $\top$-continuous.  Then
  \begin{equation*}
    \prod_{ n\ge 0} f^* g_n= \adjustlimits \bigvee_{ k_0,
      k_1,\dotsc\ge 0\;} \prod_{ n\ge 0} f^{k_n} g_n\,.
  \end{equation*}
\end{lemma}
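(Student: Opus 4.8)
The plan is to prove the two inequalities separately. The direction ``$\ge$'' is immediate: for any $k_0, k_1,\dotsc\ge 0$ we have $f^{k_n}\le f^*$ and hence $f^{k_n} g_n\le f^* g_n$ in $\FinAdd_L$, so by monotonicity of the infinite product $\prod_{n\ge 0} f^{k_n} g_n\le \prod_{n\ge 0} f^* g_n$; taking the supremum over all such tuples gives ``$\ge$''.

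For ``$\le$'', since suprema in $\FinAdd_{L,\two}$ are computed pointwise, it suffices to fix $x\in L$ with $x\prod_{n\ge 0} f^* g_n= \top$ and produce $k_0, k_1,\dotsc\ge 0$ with $x\prod_{n\ge 0} f^{k_n} g_n= \top$, \ie~with $x(f^{k_0} g_0)\dotsm(f^{k_n} g_n)\ne \bot$ for every $n$. I would build the $k_n$ one at a time, together with the partial images $y_n= x(f^{k_0} g_0)\dotsm(f^{k_n} g_n)$, maintaining the invariant $y_n\prod_{m\ge n+1} f^* g_m= \top$; this already forces $y_n\ne\bot$, since $\bot\prod_{m\ge n+1} f^* g_m=\bot$ by finite additivity of the infinite product, and that is all that is needed to conclude.

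To start the construction, one rewrites $\prod_{n\ge 0} f^* g_n$ via \Ax1 as $f^*\bigl(g_0\prod_{m\ge 1} f^* g_m\bigr)= f^* v_0$, where $v_0= g_0\prod_{m\ge 1} f^* g_m$ lies in $\FinAdd_{L,\two}$ and is $\top$-continuous by the hypothesis (case $m=0$). Since then $x f^* v_0=\top$ and $f$ is locally $^*$-closed, Lemma~\ref{le:ax4-help} supplies $k_0\ge 0$ with $x f^{k_0} v_0=\top$, that is, $y_0\prod_{m\ge 1} f^* g_m=\top$ for $y_0= x f^{k_0} g_0$. The inductive step is identical after shifting indices: from $y_n\prod_{m\ge n+1} f^* g_m=\top$ one applies \Ax1 to rewrite the tail as $f^* v_{n+1}$ with $v_{n+1}= g_{n+1}\prod_{m\ge n+2} f^* g_m$ ($\top$-continuous by the hypothesis, case $m=n+1$), so that $y_n f^* v_{n+1}=\top$, and Lemma~\ref{le:ax4-help} yields $k_{n+1}\ge 0$ with $y_n f^{k_{n+1}} v_{n+1}=\top$, \ie~the invariant at $n+1$ with $y_{n+1}= y_n f^{k_{n+1}} g_{n+1}$. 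Once the sequence $k_0, k_1,\dotsc$ is obtained, the invariant gives $y_n\ne\bot$ for all $n$, which is exactly $x\prod_{n\ge 0} f^{k_n} g_n=\top$.

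The obstacle I anticipate is organizational rather than computational: the construction is a sequential (dependent) choice in which $k_{n+1}$ depends on all previously fixed $k_0,\dotsc,k_n$ through the point $y_n$, so one must be careful that the ``$\top$'' property genuinely propagates along the shifted tails $\prod_{m\ge n+1} f^* g_m$ --- which is precisely what the repeated use of \Ax1 together with Lemma~\ref{le:ax4-help} (the latter needing local $^*$-closedness of $f$ and $\top$-continuity of each $v_m$, both supplied by the hypotheses) is designed to guarantee.
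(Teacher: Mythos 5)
Your proof is correct and follows essentially the same route as the paper: the "$\ge$" direction by monotonicity, and for "$\le$" a pointwise inductive construction of the $k_n$ that maintains the invariant $x f^{k_0}g_0\dotsm f^{k_n}g_n\prod_{m\ge n+1}f^*g_m=\top$, with each step supplied by Lemma~\ref{le:ax4-help} applied to the $\top$-continuous tail $v_{n}=g_{n}\prod_{m\ge n+1}f^*g_m$. Your explicit closing observation that the invariant forces $y_n\ne\bot$ for all $n$, hence $x\prod_{n\ge 0}f^{k_n}g_n=\top$, is a detail the paper leaves implicit but is exactly right.
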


\begin{proof}
  As infinite product is monotone, the term on the right-hand side of
  the equation is less than or equal to the term on the left-hand
  side.  To prove that equality holds, let $x\in L$ and suppose that
  $x \prod_{ n\ge 0} f^* g_n= \top$. We want to show that there exist
  integers $k_0, k_1,\dotsc\ge 0$ such that
  $x \prod_{ n\ge 0} f^{k_n} g_n= \top$.

  Let $x_0= x$.  By Lemma~\ref{le:ax4-help},
  $x \prod_{ n\ge 0} f^* g_n= x_0 f^* g_0 \prod_{ n\ge 1} f^* g_n=
  \top$
  implies that there is $k_0\ge 0$ for which
  $x_0 f^{ k_0} g_0 \prod_{ n\ge 1} f^* g_n= \top$.  We finish the
  proof by induction.  Assume we have $k_0,\dotsc, k_m\ge 0$ such that
  $x f^{ k_0} g_0\dotsm f^{ k_m} g_m \prod_{ n\ge m+ 1} f^* g_n= \top$
  and let $x_{ m+ 1}= x f^{ k_0} g_0\dotsm f^{ k_m} g_m$.  Then
  $x_{ m+ 1} f^* g_{ m+ 1} \prod_{ n\ge m+ 2} f^* g_n= \top$ implies, using
  Lemma~\ref{le:ax4-help}, that there exists $k_{ m+ 1}\ge 0$ for
  which $x_{ m+ 1} f^{ k_{ m+ 1}} g_{ m+ 1} \prod_{ n\ge m+ 2} f^*
  g_n= \top$.
\end{proof}

\begin{proposition}
  \label{pr:starcontkleom}
  Let $S\subseteq \FinAdd_L$ and $V\subseteq \FinAdd_{ L, \two}$ such
  that $( S, V)$ is a generalized $^*$-continuous Kleene algebra of
  locally $^*$-closed and $\top$-continuous functions $L\to L$ and
  $\top$-continuous functions $L\to \two$.  If\/
  $\prod_{ n\ge 0} f_n\in V$ for all sequences $f_0, f_1,\dotsc$ of
  functions in $S$, then $( S, V)$ is a $^*$-continuous Kleene
  $\omega$-algebra.
\end{proposition}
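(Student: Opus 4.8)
The plan is to observe that, since $(S,V)$ is \emph{already} assumed to be a generalized $^*$-continuous Kleene algebra, nothing remains except to supply an infinite product $S^\omega\to V$ satisfying the four conditions \Ax1--\Ax4. The natural candidate is the infinite product $\FinAdd_L^\omega\to\FinAdd_{L,\two}$ defined at the start of this section, restricted to sequences from $S$; this restriction is a well-defined map into $V$ precisely because of the hypothesis that $\prod_{n\ge0}f_n\in V$ for every sequence $f_0,f_1,\dotsc$ in $S$. So the task reduces to verifying \Ax1--\Ax4 for this operation.

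First I would dispose of the easy conditions. \Ax1 and \Ax2 hold for the infinite product on all of $\FinAdd_L$ --- this was noted in the paragraph preceding Lemma~\ref{le:ax3} --- and they are inherited by the restriction to $S$. For \Ax3, given $x_0,x_1,\dotsc,y,z\in S$, I would apply Lemma~\ref{le:ax3} with $f_n:=x_n y$ and $g_n:=x_n z$: left distributivity of composition over $\vee$ in the semiring $\FinAdd_L$ gives $f_n\vee g_n=x_n(y\vee z)$, while the set of products $\prod_{n\ge0}h_n$ over choices $h_n\in\{f_n,g_n\}$ is exactly the set of products $\prod_{n\ge0}x_n x_n'$ over choices $x_n'\in\{y,z\}$, so the lemma is precisely \Ax3.

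The real content is \Ax4, which I would derive from Lemma~\ref{le:ax4} with $f:=x$ and $g_n:=y_n$. The requirements that $f$ and every $g_n$ be locally $^*$-closed and $\top$-continuous are immediate, as these functions lie in $S$. The only non-trivial hypothesis of that lemma is that for each $m\ge0$ the function $g_m\prod_{n\ge m+1}f^*g_n$ be $\top$-continuous. Here I would argue: $S$ is a $^*$-continuous Kleene algebra, hence closed under $^*$ and under composition, so $f^*g_n\in S$ for all $n$; consequently $\prod_{n\ge m+1}f^*g_n\in V$ by hypothesis, and hence is $\top$-continuous because every element of $V$ is. It then only remains to note that the composition (equivalently, the action) of a $\top$-continuous $g\in\FinAdd_L$ with a $\top$-continuous $v\in\FinAdd_{L,\two}$ is again $\top$-continuous --- if $g=\bot$ or $v=\bot$ this is trivial, and otherwise, for $X\subseteq L$ with $\bigvee X=\top$, $\top$-continuity of $g$ yields $\bigvee X g=\top$ and then $\top$-continuity of $v$ yields $\bigvee X(gv)=\bigvee(Xg)v=\top$, which is the same computation already used for compositions in $\FinAdd_L$. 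Taking $g=g_m$ and $v=\prod_{n\ge m+1}f^*g_n$ discharges the hypothesis, so Lemma~\ref{le:ax4} delivers \Ax4.

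With \Ax1--\Ax4 verified, $(S,V)$ is a $^*$-continuous Kleene $\omega$-algebra by definition, which completes the proof. I do not expect a genuine obstacle: the argument is essentially bookkeeping together with two invocations of the lemmas just proved, and the one point that requires a moment's thought is recognizing that $f^*g_n\in S$, which is exactly what puts $\prod_{n\ge m+1}f^*g_n$ into $V$ and thereby makes the side condition of Lemma~\ref{le:ax4} applicable.
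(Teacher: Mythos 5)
Your proof is correct and follows exactly the route the paper takes: the paper's own proof is the one-line remark that the claim is clear from Lemmas~\ref{le:ax3} and~\ref{le:ax4}, and your write-up simply makes explicit the bookkeeping (inheritance of \Ax1--\Ax2, the substitution reducing \Ax3 to Lemma~\ref{le:ax3}, and the verification that $f^*g_n\in S$ puts $\prod_{n\ge m+1}f^*g_n$ into $V$ so that the side condition of Lemma~\ref{le:ax4} holds) that the authors leave implicit.
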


\begin{proof}
  This is clear from Lemmas~\ref{le:ax3} and~\ref{le:ax4}.
\end{proof}

We finish the section by a lemma which exhibits a condition on the
lattice $L$ which ensures that infinite products of locally
$^*$-closed and $\top$-continuous functions are again
$\top$-continuous.

\begin{lemma}
  \label{le:infprodenergy}
  Assume that $L$ has the property that whenever $\bigvee X= \top$ for
  some $X\subseteq L$, then for all $x< \top$ in $L$ there is $y\in X$
  with $x\le y$.  If $f_0, f_1,\dotsc\in \FinAdd_L$ is a sequence of
  locally $^*$-closed and $\top$-continuous functions, then
  $\prod_{ n\ge 0} f_n\in \FinAdd_{ L, \two}$ is $\top$-continuous.
\end{lemma}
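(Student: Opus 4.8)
The plan is to dispose of the trivial case, normalise the data, reduce $\top$-continuity of $v:=\prod_{n\ge 0}f_n$ to finding, for a given family joining to $\top$, one member that survives the whole infinite product, and then produce such a member from the hypothesis on $L$. If $v=\bot$ there is nothing to prove, so assume $v\ne\bot$. Since $v$ is finitely additive, hence monotone, and is not constantly $\bot$, some $x$ satisfies $xv=\top$, so $\top v=\top$; by the definition of the infinite product this means $\top f_0\dotsm f_n\ne\bot$ for every $n$. In particular no $f_n$ is the constant function $\bot$, so each $f_n\ne\bot$, and since each $f_n$ is $\top$-continuous we get $\top f_n=\top$; hence every finite composite $g_n:=f_0 f_1\dotsm f_n$ is $\top$-continuous (compositions of $\top$-continuous functions are) and satisfies $\top g_n=\top$, so in particular $g_n\ne\bot$.

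Next fix $X\subseteq L$ with $\bigvee X=\top$; the task is to find $x\in X$ with $xv=\top$, i.e.\ with $x g_n\ne\bot$ for all $n$. Put $D_n=\{x\in X: x g_n\ne\bot\}$ and $B_n=X\setminus D_n$, so that $D_0\supseteq D_1\supseteq\dotsm$, $B_0\subseteq B_1\subseteq\dotsm$, and $\{x\in X: xv=\top\}=\bigcap_n D_n$. The one substantive estimate at this stage is that $\bigvee B_n<\top$ for every $n$: since $(B_n)g_n=\{\bot\}$, if $\bigvee B_n$ were $\top$ then $\top$-continuity of the nonzero function $g_n$ would force $\bigvee(B_n)g_n=\top$, which is impossible.

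What remains is to pass from ``each failure set $B_n$ is bounded away from $\top$'' to ``some single element of $X$ lies outside every $B_n$'', and this is the step I expect to be the main obstacle. One sub-case is easy: if some $z<\top$ already satisfies $zv=\top$, then the hypothesis on $L$, applied to $X$, yields $y\in X$ with $z\le y$, and $yv=\top$ by monotonicity of $v$, whence $\bigvee Xv=\top$. The hard sub-case is when $v$ takes the value $\top$ only at $\top$ itself; then $\top\notin X$ and $X=\bigcup_n B_n$, so the increasing chain $\bigvee B_0\le\bigvee B_1\le\dotsm$ of elements strictly below $\top$ has supremum $\bigvee X=\top$, and one must bring the cofinality hypothesis on $L$ — that any family joining to $\top$ is cofinal in $\{z:z<\top\}$ — to bear in order to exclude this configuration; making that argument precise is the crux. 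It is here that the structure of $L$ is indispensable: $\top$-continuity, or even local $^*$-closedness, of the individual $f_n$ only yields the level-wise bounds of the previous paragraph, and the naive remedy of taking a directed join of ever-longer partial survivors fails because that join need not lie in $X$. The K{\"o}nig's-lemma device used in the proof of Lemma~\ref{le:ax3} does not transfer directly either, since $X$ need not be finitely branching — which is exactly why a hypothesis on $L$ is imposed.
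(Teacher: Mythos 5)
Your proposal is not a complete proof: you explicitly leave open the ``hard sub-case'' in which $v=\prod_{n\ge 0}f_n$ takes the value $\top$ only at $\top$ itself, and the machinery you set up for it (the sets $D_n$, $B_n$ and the level-wise bounds $\bigvee B_n<\top$) is never brought to a conclusion. The part you do complete --- if some $z<\top$ has $zv=\top$, use the hypothesis on $L$ to produce $y\in X$ with $z\le y$, and conclude $yv\ge zv=\top$ by monotonicity --- is, almost word for word, the paper's \emph{entire} proof. The paper disposes of your hard sub-case in a single unjustified sentence: ``if $v\ne\bot$, then there is some $x<\top$ with $xv=\top$.''

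Your instinct that this is the crux is therefore correct, and in fact the configuration you were trying to exclude does occur under the stated hypotheses, so no amount of care will close the gap without changing the statement. Take $L=[0,\top]_\bot$ as in Section~\ref{se:energy2}, which satisfies the cofinality hypothesis, and let every $f_n$ equal the extended energy function $f$ with $xf=x-1$ for $1\le x<\top$ and $xf=\bot$ for $x<1$ (so $\bot f=\bot$, $\top f=\top$). Each such $f$ is finitely additive, $\top$-continuous (the image of an unbounded set of reals is unbounded) and locally $^*$-closed, since $xf\le x$ gives $xf^*=x$ for $x<\top$. But $xf^n=\bot$ as soon as $n>x$, so $xv=\bot$ for every $x<\top$ while $\top v=\top$; hence $v\ne\bot$, yet $X=\Realnn$ satisfies $\bigvee X=\top$ and $\bigvee Xv=\bot$. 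So the step you could not make precise is not an oversight on your part: as stated, the lemma requires an extra hypothesis (for instance that some $x<\top$ survives the whole product whenever $\top$ does, or a convention identifying such $v$ with $\bot$) before either your argument or the paper's goes through.
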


\begin{proof}
  Let $v= \prod_{ n\ge 0} f_n$.  We already know that $v$ is finitely
  additive.  We need to show that if $v\ne \bot$, then $v$ is
  $\top$-continuous.  But if $v\ne \bot$, then there is some $x< \top$
  with $x v= \top$, \ie~such that $x f_0\dotsm f_n> \bot$ for all $n$.
  By assumption, there is some $y\in X$ with $x\le y$.  It follows
  that $y f_0\dotsm f_n\ge x f_0\dotsm f_n> \bot$ for all $n$ and thus
  $\bigvee X v= \top$.
\end{proof}

\section{Energy Automata Revisited}
\label{se:energy2}

We finish this paper by showing how the setting developed in the last
sections can be applied to solve the energy problems of
Section~\ref{se:energy}.  Let $L=[ 0, \top]_\bot$ be the complete
lattice of nonnegative real numbers together with $\top= \infty$ and
an extra bottom element $\bot$, and extend the usual order and
operations on real numbers to $L$ by declaring that $\bot< x< \top$,
$\bot- x= \bot$ and $\top+ x= \top$ for all $x\in \Realnn$.  Note that
$L$ satisfies the precondition of Lemma~\ref{le:infprodenergy}.

We extend the definition of energy function:

\begin{definition}
  An \emph{extended energy function} is a mapping $f: L\to L$ for
  which $\bot f= \bot$, $\top f= \bot$ if $x f= \bot$ for all
  $x< \top$ and $\top f= \top$ otherwise, and $y f\ge x f+ y- x$
  whenever $\bot< x< y< \top$.  The set of such functions is denoted
  $\E$.
\end{definition}

Every energy function $f: \Realnn\parto \Realnn$ as of
Definition~\ref{de:energy} gives rise to an extended energy function
$\tilde f: L\to L$ given by $\bot \tilde f= \bot$, $x \tilde f= \bot$
if $x f$ is undefined, $x \tilde f= x f$ otherwise for $x\in \Realnn$,
and $\top \tilde f= \top$.  This defines an embedding
$\F\hookrightarrow \E$.

The definition entails that for all $f\in \E$ and all $x< y\in L$,
$x f= \top$ implies $y f= \top$ and $y f= \bot$ implies $x f= \bot$.
Note that $\E$ is closed under (pointwise) binary supremum $\vee$ and
composition and contains the functions $\bot$ and $\id$.

\begin{lemma}
  Extended energy functions are finitely additive and
  $\top$-continuous, hence $\E\subseteq \FinAdd_L$ is a semiring.
\end{lemma}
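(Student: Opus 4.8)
The plan is to check the two stated properties of an individual $f\in\E$ — finite additivity and $\top$-continuity — and then to read off that $\E$ is a subsemiring of $\FinAdd_L$ from facts already recorded in Section~\ref{se:finadd}. For finite additivity I would first show that every $f\in\E$ is monotone on all of $L$. On the interior, where $\bot< x< y< \top$, this follows from the defining inequality $y f\ge x f+ y- x$: if $x f\in\Realnn$ then $y f\ne\bot$ (using the noted implication that $y f=\bot$ forces $x f=\bot$) and $y f\ge x f+ y- x> x f$ when $y f\in\Realnn$, while $y f=\top\ge x f$ otherwise; if $x f=\bot$ there is nothing to show; and if $x f=\top$ then $y f=\top$ by the other noted implication. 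The boundary cases are immediate from the definition: $\bot f=\bot$ is the least element, and $\top f=\bot$ only when $x f=\bot$ for every $x< \top$. Since $L$ is a chain, $x\vee y$ is one of $x, y$, so monotonicity together with $\bot f=\bot$ yields $( x\vee y) f= x f\vee y f$; hence $f$ is finitely additive and $\E\subseteq\FinAdd_L$.

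Next I would establish $\top$-continuity of each $f\in\E$. If $f=\bot$ there is nothing to prove, so assume $f\ne\bot$. Then there is some $x_0< \top$ with $x_0 f\ne\bot$, for otherwise $x f=\bot$ for all $x< \top$, which by the definition forces $\top f=\bot$ as well, contradicting $f\ne\bot$. Now let $X\subseteq L$ with $\bigvee X=\top$. If $\top\in X$, then the existence of $x_0$ with $x_0 f\ne\bot$ forces $\top f=\top$, so $\bigvee X f=\top$. Otherwise $X\cap\Realnn$ is unbounded; choosing $x\in X$ with $x> x_0$ arbitrarily large, we get $x f=\top$ if $x_0 f=\top$, and $x f\ge x_0 f+ x- x_0$ if $x_0 f\in\Realnn$, the latter growing without bound as $x\to\infty$ exactly as in the second case of Lemma~\ref{le:inf-strict}. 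Either way $\bigvee X f=\top$, so $f$ is $\top$-continuous.

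To finish, I would invoke the already-noted facts that $\E$ contains $\bot$ and $\id$ and is closed under binary supremum and composition, and that the $\top$-continuous functions in $\FinAdd_L$ form an idempotent semiring; together with the two properties just proved this shows that $\E$ is a subsemiring of $\FinAdd_L$, indeed of the subsemiring of $\top$-continuous functions. The only step requiring genuine care is the monotonicity argument, where the values $\bot$ and $\top$ of $x f$ and $y f$ must be tracked case by case; once monotonicity holds, finite additivity is forced by $L$ being linearly ordered, and $\top$-continuity is a direct adaptation of the unbounded-growth mechanism already exploited in Lemma~\ref{le:inf-strict}.
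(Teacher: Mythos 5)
Your proof is correct and follows essentially the same route as the paper: finite additivity is reduced to monotonicity (which you spell out case by case where the paper merely asserts it), and $\top$-continuity is obtained from the unbounded-growth consequence of $y f\ge x f+ y- x$, just as in the paper's argument with the sequence $x_0+ n$. The only difference is cosmetic --- you pick a witness $x_0$ with $x_0 f\ne\bot$ first and then use unboundedness of $X$, while the paper starts from $x_0\in X$ --- so no further comment is needed.
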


\begin{proof}
  Finite additivity follows from monotonicity.  For $\top$-continuity,
  let $X\subseteq L$ such that $\bigvee X= \top$ and $f\in \E$,
  $f\ne \bot$.  We have $X\ne \{ \bot\}$, so let
  $x_0\in X\setminus\{ \bot\}$ and, for all $n\ge 0$, $x_n= x_0+ n$.
  Let $y_n= x_n f$.  If $y_n= \bot$ for all $n\ge 0$, then also
  $n f= \bot$ for all $n\ge 0$ (as $x_n\ge n$), hence $f= \bot$.  We
  must thus have an index $N$ for which $y_N> \bot$.  But then
  $y_{ N+ k}\ge y_N+ k$ for all $k\ge 0$, hence $\bigvee X f= \top$.
\end{proof}

\begin{lemma}
  For $f\in \E$, $f^*$ is given by $x f^*= x$ if $x f\le x$ and
  $x f^*= \top$ if $x f> x$.  Hence $f$ is locally $^*$-closed and
  $f^*\in \E$.
\end{lemma}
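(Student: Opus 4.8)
The plan is to read off $f^*$ directly from its definition $f^*=\bigvee_{n\ge 0}f^n$, where the supremum is pointwise, so that $x f^*=\bigvee_{n\ge 0}x f^n$ with $x f^0=x$. I would first record that every $f\in\E$ is monotone: finite additivity gives $y f=(x\vee y)f=x f\vee y f\ge x f$ whenever $x\le y$, hence each $f^n$ and therefore $f^*$ is monotone as well. Since $L$ is a chain, for each $x\in L$ exactly one of $x f\le x$ and $x f>x$ holds, and I treat these two cases separately.

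In the case $x f\le x$, applying the monotone map $f$ repeatedly yields $x f^{n+1}=(x f^n)f\le x f^n$ by induction (the base case being $x f\le x=x f^0$), so $(x f^n)_{n}$ is non-increasing and $\bigvee_{n\ge 0}x f^n=x f^0=x$; thus $x f^*=x$. In the case $x f>x$, I note first that $x\ne\bot$ (otherwise $x f=\bot f=\bot=x$) and $x\ne\top$ (otherwise $x f\le\top=x$), so $\bot<x<\top$. If $x f=\top$ then $x f^*\ge x f=\top$ already. Otherwise $x f\in\Realnn$ and $M:=x f-x>0$, and I rerun the argument of Lemma~\ref{le:inf-strict}: writing $x_n=x f^n$, as long as $x_n<\top$ we have $\bot<x<x_n<\top$, so the slope inequality~\eqref{eq:deriv1} (in the extended form, valid for finite arguments strictly between $\bot$ and $\top$) gives $x_{n+1}=x_n f\ge x f+x_n-x=x_n+M$. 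Hence either some $x_N=\top$, or the reals $x_n$ increase without bound so that $\bigvee_n x_n=\top$; in both cases $x f^*=\top$.

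From this, local $^*$-closedness is immediate: in the first case $x f^*=x=x f^0$, so $N=0$ works, and the second case is exactly the alternative $x f^*=\top$ in the definition. To see $f^*\in\E$, the conditions $\bot f^*=\bot$ and $\top f^*=\top$ follow from the computed formula (and from $\top f^0=\top$), and the slope condition $y f^*\ge x f^*+y-x$ for $\bot<x<y<\top$ is checked by cases on $x f^*\in\{x,\top\}$ and $y f^*\in\{y,\top\}$: monotonicity of $f^*$ rules out the combination $x f^*=\top$, $y f^*=y$, and in each of the three remaining combinations the inequality reduces to $y\ge y$, $\top\ge y$, or $\top\ge\top$. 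I do not expect a real obstacle here, since the content is essentially a repackaging of Lemma~\ref{le:inf-strict}; the only point demanding care is the bookkeeping around $\bot$ and $\top$ — verifying the two cases are exhaustive, applying the extended slope inequality only to finite arguments, and invoking monotonicity of $f^*$ to discard the one impossible combination in the final check that $f^*\in\E$.
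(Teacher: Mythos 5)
Your proof is correct and follows essentially the same route as the paper: compute $x f^* = \bigvee_{n\ge 0} x f^n$ pointwise, splitting on $x f \le x$ (non-increasing iterates, supremum $x$) versus $x f > x$ (iterates grow by at least $M = xf - x$ each step, supremum $\top$). You are in fact slightly more thorough than the paper, which leaves the verification that $f^* \in \E$ implicit; your explicit case check of the slope condition for $f^*$ is a welcome addition and is carried out correctly.
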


\begin{proof}
  We have $\bot f^*= \bot$ and $\top f^*= \top$.  Let
  $x\ne \bot, \top$.  If $x f\le x$, then $x f^n\le x$ for all
  $n\ge 0$, so that $x\le \bigvee_{ n\ge 0} x f^n\le x$, whence
  $x f^*= x$.  If $x f> x$, then let $a= x f- x> 0$.  We have
  $x f\ge x+ a$, hence $x f^n\ge x+ n a$ for all $n\ge 0$, so that
  $x f^*= \bigvee_{ n\ge 0} x f^n= \top$.
\end{proof}

Not all locally $^*$-closed functions $f: L\to L$ are energy
functions: the function $f$ defined by $x f= 1$ for $x< 1$ and
$x f= x$ for $x\ge 1$ is locally $^*$-closed, but $f\notin \E$.

\begin{corollary}
  $\E$ is a $^*$-continuous Kleene algebra.
\end{corollary}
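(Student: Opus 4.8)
The plan is to verify that $\E$ satisfies the hypotheses of Proposition~\ref{pr:lclotop} and then simply invoke it. First I would note that $\E$ is a subsemiring of $\FinAdd_L$: the preceding lemma shows that every extended energy function is finitely additive, so $\E\subseteq\FinAdd_L$, and we have already remarked that $\E$ is closed under pointwise binary supremum $\vee$ and under composition and contains the functions $\bot$ and $\id$. Hence $\E$, with operations $\vee$ and $;$ and constants $\bot$ and $\id$, is an idempotent subsemiring of $\FinAdd_L$. Second, the lemma computing $f^*$ explicitly (namely $x f^*=x$ when $x f\le x$ and $x f^*=\top$ when $x f>x$) shows both that each $f\in\E$ is locally $^*$-closed and that $f^*\in\E$, so $\E$ is closed under the $^*$-operation. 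Third, the lemma on finite additivity also established that every $f\in\E$ is $\top$-continuous.

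Thus $\E$ is a subsemiring of $\FinAdd_L$ closed under $^*$ in which every element is locally $^*$-closed and $\top$-continuous, and Proposition~\ref{pr:lclotop} applies verbatim to yield that $\E$ is a $^*$-continuous Kleene algebra. There is essentially no obstacle remaining at this point: all the substantive work—identifying the closed form of $f^*$, checking that it stays inside $\E$, and verifying $\top$-continuity—was carried out in the two lemmas immediately preceding the corollary, so the corollary is a direct specialization of the general Proposition~\ref{pr:lclotop} about subsemirings of $\FinAdd_L$.
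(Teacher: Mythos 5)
Your proof is correct and follows exactly the paper's route: the paper's own proof is the one-liner ``This is clear by Proposition~\ref{pr:lclotop}'', relying on the two preceding lemmas in precisely the way you spell out. Your version merely makes explicit the verification of the hypotheses (subsemiring closed under $^*$, local $^*$-closedness, $\top$-continuity) that the paper leaves implicit.
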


\begin{proof}
  This is clear by Proposition~\ref{pr:lclotop}.
\end{proof}

\begin{remark}
  It is \emph{not} true that $\E$ is a \emph{continuous} Kleene
  algebra: Let $f_n, g\in \E$ be defined by
  $x f_n= x+ 1- \frac1{ n+ 1}$ for $x\ge 0$, $n\ge 0$ and $x g= x$ for
  $x\ge 1$, $x g= \bot$ for $x< 1$.  Then
  $0( \bigvee_{ n\ge 0} f_n) g=( \bigvee_{ n\ge 0} 0 f_n) g= 1 g= 1$,
  whereas
  $0 \bigvee_{ n\ge 0}( f_n g)= \bigvee_{ n\ge 0}( 0 f_n g)= \bigvee_{
    n\ge 0}(( 1- \frac1{ n+ 1}) g)= \bot$.
\end{remark}

Let $\V$ denote the $\E$-semimodule of all $\top$-continuous functions
$L\to \two$.  For $f_0, f_1,\dotsc\in \E$, define the infinite product
$f= \prod_{ n\ge 0} f_n:L\to \two$ by $x f= \bot$ if there is an index
$n$ for which $x f_0\dotsm f_n= \bot$ and $x f= \top$ otherwise, like
in Section~\ref{se:finaddomega}.  By Lemma~\ref{le:infprodenergy},
$\prod_{ n\ge 0} f_n$ is $\top$-continuous,
\ie~$\prod_{ n\ge 0} f_n\in \V$.

By Proposition~\ref{pr:genstarkle}, $( \E, \V)$ is a generalized
$^*$-continuous Kleene algebra.

\begin{corollary}
  $( \E, \V)$ is a $^*$-continuous Kleene $\omega$-algebra.
\end{corollary}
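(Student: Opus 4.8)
The plan is to obtain this corollary as a direct application of Proposition~\ref{pr:starcontkleom}, taking $S=\E$ and $V=\V$ (so $L'=\two$); essentially all the real work has already been done in the preceding lemmas, and the proof consists in checking that the hypotheses of that proposition are met. Concretely, Proposition~\ref{pr:starcontkleom} requires four things: that $(\E,\V)$ be a generalized $^*$-continuous Kleene algebra, that every $f\in\E$ be locally $^*$-closed, that every $f\in\E$ and every $v\in\V$ be $\top$-continuous, and that $\prod_{n\ge 0}f_n\in\V$ for every sequence $f_0,f_1,\dots\in\E$.

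First I would record that $(\E,\V)$ is a generalized $^*$-continuous Kleene algebra: this follows from Proposition~\ref{pr:genstarkle}, since $\E$ is a $^*$-continuous Kleene algebra (established just above), every $f\in\E$ is locally $^*$-closed and $\top$-continuous by the two lemmas on $\E$ preceding the corollary, and every $v\in\V$ is $\top$-continuous by the definition of $\V$. The locally-$^*$-closed and $\top$-continuous conditions needed by Proposition~\ref{pr:starcontkleom} are thus the same facts, already in hand. It then remains only to observe that the infinite product $\prod_{n\ge 0}f_n:L\to\two$ introduced for $\E$ is precisely the restriction to $\E$ of the infinite product $\FinAdd_L^\omega\to\FinAdd_{L,\two}$ defined in Section~\ref{se:finaddomega}, and that this product lands in $\V$, i.e.\ that $\prod_{n\ge 0}f_n$ is $\top$-continuous. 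This last point is exactly Lemma~\ref{le:infprodenergy}, applicable because $L=[0,\top]_\bot$ satisfies its precondition — whenever $\bigvee X=\top$, then for every $x<\top$ there is $y\in X$ with $x\le y$ — as was already noted when $L$ was introduced.

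With all four hypotheses verified, Proposition~\ref{pr:starcontkleom} yields that $(\E,\V)$ is a $^*$-continuous Kleene $\omega$-algebra. I do not expect any genuine obstacle here: the proof is a matter of assembling results already proven. The single step that required real thought, rather than bookkeeping, is the verification that $L$ meets the precondition of Lemma~\ref{le:infprodenergy} — this is where the specific order-theoretic shape of the extended nonnegative reals is used, namely that $\top=\infty$ is the supremum only of sets that are unbounded in $\Realnn$ — and that verification has been carried out already. So the corollary follows by citing Proposition~\ref{pr:starcontkleom} together with the generalized-$^*$-continuous-Kleene-algebra property of $(\E,\V)$ and Lemma~\ref{le:infprodenergy}.
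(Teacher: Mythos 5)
Your proof is correct and follows exactly the paper's route: the paper also derives the corollary directly from Proposition~\ref{pr:starcontkleom}, with the hypotheses supplied by the preceding lemmas on $\E$, Proposition~\ref{pr:genstarkle}, and Lemma~\ref{le:infprodenergy}. Your write-up merely makes explicit the bookkeeping that the paper leaves implicit.
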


\begin{proof}
  This is clear by Proposition~\ref{pr:starcontkleom}.
\end{proof}

\begin{remark}
  As $\E$ is not a continuous Kleene algebra, it also holds that
  $( \E, \V)$ is not a continuous Kleene $\omega$-algebra; in fact it
  is clear that there is no $\E$-semimodule $\V'$ for which
  $( \E, \V)$ would be a continuous Kleene $\omega$-algebra.  The
  initial motivation for the work in~\cite{DBLP:conf/dlt/EsikFL15} and the
  present paper was to generalize the theory of continuous Kleene
  $\omega$-algebras so that it would be applicable to energy
  functions.
\end{remark}

Noting that energy automata are weighted automata over $\E$ in the
sense of Section~\ref{se:weightedaut}, we can now solve the
reachability and B{\"u}chi problem for energy automata:

\begin{theorem}
  Let $A=( \alpha, M, k)$ be an energy automaton and $x_0\in \Realnn$.
  There exists a finite run of $A$ from an initial state to an
  accepting state with initial energy $x_0$ iff $x_0| A|> \bot$.
\end{theorem}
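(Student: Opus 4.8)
The plan is to unfold the definition of the finite behavior $|A| = \alpha M^* \kappa$ and match it, term by term, with the combinatorial description of finite runs of the energy automaton. First I would recall, from the discussion in Section~\ref{se:weightedaut}, that by the formula for $M^*$ in a matrix $^*$-continuous Kleene algebra, the $(i,j)$-entry of $M^*$ is the supremum, over all finite sequences of intermediate states $k_1,\dots,k_m$, of the products $M_{i,k_1} M_{k_1,k_2}\dotsm M_{k_m,j}$; since each $M_{i,j}$ is itself a finite supremum of the energy functions labelling the transitions from $i$ to $j$, distributivity of composition over $\vee$ in $\E$ lets me rewrite $M^*_{i,j}$ as $\bigvee_\pi f_\pi$, the supremum over all finite paths $\pi$ in $A$ from $i$ to $j$ of the combined energy functions $f_\pi = f_1 f_2 \dotsm f_n$. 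Multiplying on the left by the row vector $\alpha$ (which picks out initial states) and on the right by $\kappa$ (which picks out accepting states, i.e.\ those indexed $\le k$), I get that $|A| = \bigvee_\pi f_\pi$ where $\pi$ ranges over all finite paths from an initial to an accepting state.

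Next I would evaluate this supremum at the point $x_0 \in \Realnn \subseteq L$. Because suprema in $\FinAdd_L$ (and hence in $\E$ and in the matrix semiring over $\E$) are computed pointwise, $x_0 |A| = \bigvee_\pi x_0 f_\pi$ in $L$. Now I invoke the key observation that for any single path $\pi = (s_0,f_1,s_1),\dots,(s_{n-1},f_n,s_n)$, the value $x_0 f_\pi = x_0 f_1 f_2 \dotsm f_n$ is exactly the energy level reached after following $\pi$ from the global state $(s_0, x_0)$: by the definition of composition of (extended) energy functions, $x_0 f_\pi > \bot$ precisely when each intermediate partial application $x_0 f_1 \dotsm f_i$ is defined (i.e.\ lies in $\Realnn$), which is precisely the condition that $\pi$ lifts to a genuine run $((s_0,x_0),f_1,(s_1,x_1)),\dots$ in the graph of global states, with $x_i = x_0 f_1 \dotsm f_i$. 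So $x_0 f_\pi > \bot$ iff there is a finite run of $A$ along the underlying path from $(s_0,x_0)$, and this run ends in an accepting state exactly when $\pi$ does.

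Putting these together: $x_0 |A| = \bigvee_\pi x_0 f_\pi > \bot$ in $L$ iff at least one term $x_0 f_\pi$ in the supremum exceeds $\bot$ (since $\bot$ is the least element and the supremum of a family all of whose members equal $\bot$ is $\bot$), iff there exists a finite path $\pi$ from an initial to an accepting state with $x_0 f_\pi > \bot$, iff there exists an accepting finite run of $A$ from $(s_0,x_0)$. This is the claimed equivalence. The main obstacle, and the step deserving the most care, is the bookkeeping that translates the algebraic product $x_0 f_1 \dotsm f_n$ faithfully into the step-by-step run semantics --- in particular making sure the convention that undefined-ness of an intermediate energy value propagates to $\bot$ under composition matches the run's requirement that \emph{every} prefix stay in $\Realnn$, and handling the boundary between the embedded functions in $\F$ and their extensions in $\E$ (where $\bot$ plays the role of ``undefined''). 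Everything else --- the expansion of $M^*$, distributivity over $\vee$, and pointwise computation of suprema --- is routine given the results already established.
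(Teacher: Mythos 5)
Your proposal is correct and follows essentially the same route as the paper, which leaves this theorem without an explicit proof and instead relies on the earlier observation that $|A|=\alpha M^*\kappa$ is the supremum of the products $f_\pi$ of transition labels over all finite paths $\pi$ from an initial to an accepting state, evaluated pointwise at $x_0$. Your elaboration of the correspondence between $x_0 f_\pi>\bot$ and the existence of a genuine run (with every prefix staying in $\Realnn$) is exactly the implicit content of the paper's argument, just spelled out.
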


\begin{theorem}
  Let $A=( \alpha, M, k)$ be an energy automaton and $x_0\in
  \Realnn$.  There exists an infinite run of $A$ from an initial state
  which infinitely often visits an accepting state iff $x_0\| A\|=
  \top$.
\end{theorem}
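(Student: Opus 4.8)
The plan is to read the theorem off from the semantic description of $\|A\|$ recorded in Section~\ref{se:weightedaut}, the concrete form of the infinite product on $\FinAdd_L$ from Section~\ref{se:finaddomega}, and the definition of a run of an energy automaton.

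First I would invoke the corollary that $(\E,\V)$ is a $^*$-continuous Kleene $\omega$-algebra, so that the $\omega$-operation on $\E^{n\times n}$ described in Section~\ref{se:weightedaut} is available and, by~\cite[Thm.~20]{DBLP:conf/dlt/EsikFL15}, $\|A\|\in\V$ is the supremum in $\V$ of the infinite products $\prod_{m\ge1} g_m$ taken over all infinite paths $\pi$ of $A$ that begin in an initial state, visit an accepting state infinitely often, and carry the sequence of edge labels $g_1,g_2,\dots\in\E$. Since $\V\subseteq\FinAdd_{L,\two}$ and suprema in $\FinAdd_{L,\two}$ (and hence in $\V$) are formed pointwise, evaluating both sides at $x_0$ yields $x_0\|A\|=\bigvee_\pi\,\bigl(x_0\prod_{m\ge1}g_m\bigr)$, the join ranging over the same set of accepting infinite paths $\pi$.

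Next I would unwind a single summand. By the definition of the infinite product in Section~\ref{se:finaddomega}, $x_0\prod_{m\ge1}g_m=\top$ exactly when $x_0 g_1\cdots g_m>\bot$ for every $m\ge1$, and it equals $\bot$ otherwise. As every $g_m$ is the image under the embedding $\F\hookrightarrow\E$ of an ordinary partial energy function, it sends each nonnegative real either to a nonnegative real or to $\bot$; thus, starting from $x_0\in\Realnn$, each iterate $x_m:=x_0 g_1\cdots g_m$ is a nonnegative real or $\bot$, and is never $\top$. Hence ``$x_m>\bot$ for all $m$'' says precisely that the sequence of global states $(s_0,x_0),(s_1,x_1),(s_2,x_2),\dots$ --- with $s_m$ the $m$-th state on $\pi$ and $x_{m+1}=x_m g_{m+1}$ --- is a legitimate infinite run of $A$ from $(s_0,x_0)$: the step from $(s_m,x_m)$ to $(s_{m+1},x_{m+1})$ is a transition of global states iff $g_{m+1}$ is defined at $x_m$, i.e.\ iff $x_{m+1}>\bot$; and this run visits an accepting state infinitely often iff $\pi$ does, by construction.

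Finally I would assemble these facts. Every summand $x_0\prod_{m\ge1}g_m$ lies in the two-element lattice $\{\bot,\top\}$, so the join $x_0\|A\|$ is $\top$ iff at least one summand is, that is, iff there is an infinite path $\pi$ from an initial state, visiting an accepting state infinitely often, along which the induced run from $(s_0,x_0)$ never gets stuck; conversely any accepting infinite run of $A$ from $(s_0,x_0)$ arises in this way. This is exactly the asserted equivalence. The argument is essentially an unwinding of definitions, and the only point requiring care is this last bookkeeping --- matching an infinite path of the weighted automaton whose every finite prefix has ``behaviour $>\bot$'' with an infinite run through global states visiting $F$ infinitely often --- which hinges on the observation that iterating energy functions from a real initial energy can never reach $\top$ in finitely many steps, so that ``$>\bot$'' coincides with ``defined''. (The companion reachability theorem is proved the same way, using that $|A|=\bigvee_\pi f_\pi$ over finite accepting paths together with the fact that $x_0 f_\pi>\bot$ iff the finite run along $\pi$ from $(s_0,x_0)$ is defined.)
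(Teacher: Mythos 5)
Your proposal is correct and follows exactly the route the paper intends: the paper itself gives no explicit proof of this theorem, relying precisely on the cited characterization of $\| A\|$ as the supremum of infinite products of edge labels over accepting infinite paths, the pointwise nature of suprema in $\FinAdd_{ L, \two}$, and the definition of the infinite product (which is $\top$ at $x_0$ iff every finite prefix evaluates to something above $\bot$, i.e.\ iff the corresponding run through global states is defined). Your additional observation that iterates of (embedded) energy functions starting from a real never reach $\top$ is the right bookkeeping detail to make the correspondence between paths and runs precise.
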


\begin{corollary}
  Problems~\ref{pb:reach} and~\ref{pb:buchi} are decidable.
\end{corollary}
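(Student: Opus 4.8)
The plan is to deduce the corollary from the two preceding theorems by turning the conditions ``$x_0|A|>\bot$'' and ``$x_0\|A\|=\top$'' into effective tests. Fix a class $\F'\subseteq\E$ of computable energy functions that is effectively closed under binary supremum $\vee$ and composition and on which the $^*$-operation and the scalar infinite product $f\mapsto f^\omega$ are effective (one may take, e.g., a suitable class of piecewise-affine energy functions with rational data, cf.~\cite{DBLP:conf/atva/EsikFLQ13}); on such a class the order of $L$ is decidable on computable points. For an $\F'$-automaton $A=(\alpha,M,k)$ and computable $x_0\in\Realnn$ it then suffices, by the two theorems, to compute $x_0|A|\in L$ and $x_0\|A\|\in\two$ and to compare them with $\bot$, respectively with $\top$.

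For reachability, $|A|=\alpha M^*\kappa$, so everything reduces to computing $M^*\in\E^{n\times n}$. Rather than forming the infinite supremum over all path labels directly, I would apply the block identity~\eqref{eq:mstar} recursively on the dimension $n$: it expresses each entry of $M^*$ through finitely many binary suprema, compositions, and $^*$-operations applied to entries of $M$ and to already-computed blocks of smaller dimension, the recursion bottoming out in the $1\times1$ case. For $f\in\E$ we have the closed form $xf^*=x$ if $xf\le x$ and $xf^*=\top$ otherwise (with $\bot f^*=\bot$ and $\top f^*=\top$), which is effective, and $\vee$ and composition are effective on $\F'$; hence $M^*$, the function $|A|=\alpha M^*\kappa\in\E$, and the value $x_0|A|$ are computable, and comparing it with $\bot$ decides Problem~\ref{pb:reach}.

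For B\"uchi acceptance the same strategy works on $\|A\|$. Writing $M=\bigl(\begin{smallmatrix}a&b\\c&d\end{smallmatrix}\bigr)$ with $a$ the block on the $k$ accepting states, the definition of $\|A\|$ presents it as a finite expression in $\alpha$ and the blocks $a,b,c,d$ built from~\eqref{eq:mstar}, the analogous block formula for the matrix $\omega$-operation, and finitely many binary suprema and compositions. Unfolding the $\omega$-block formula recursively on the dimension reduces the computation of $\|A\|$ to finitely many $\vee$, compositions, $^*$-operations, and the scalar infinite product $f\mapsto f^\omega=\prod_{m\ge0}f\in\V$. The one genuinely new ingredient is thus the evaluation of $x\,f^\omega$ for $x\in L$, $f\in\E$, i.e.\ deciding whether the orbit $x,xf,xf^2,\dotsc$ ever reaches $\bot$. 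This is precisely the dichotomy of Lemma~\ref{le:inf-strict} in its extended-energy form: if $xf=\bot$ then $xf^\omega=\bot$; if $\bot<xf<x$ the orbit decreases without bound, hence reaches $\bot$ after boundedly many steps, so $xf^\omega=\bot$; and if $x\ne\bot$ and $xf\ge x$ the orbit never reaches $\bot$, so $xf^\omega=\top$. Hence $x_0\|A\|\in\two$ is computable, and testing it against $\top$ decides Problem~\ref{pb:buchi}.

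The step I expect to require the most care is the $\omega$-part: one must verify that the recursive unfolding of the matrix $\omega$-operation terminates and that no honest infinite supremum or infinite product over an a priori unbounded index set survives in the final expression for $\|A\|$. For the finitary subterms this is underwritten by the fact that the suprema occurring in $M^*$ and $M^\omega$ range over regular sets and collapse to the closed forms for $f^*$ and $f^\omega$ in $\E$ (\cite[Thm.~9]{DBLP:journals/jucs/EsikK02}, \cite[Lemma~4]{DBLP:conf/dlt/EsikFL15}); the single genuinely infinite product that remains is rendered decidable by Lemma~\ref{le:inf-strict}. Together with the closure and effectivity assumptions on $\F'$, this yields the decision procedures and hence the corollary.
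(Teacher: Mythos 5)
Your argument is correct and is essentially the proof the paper intends: the paper states the corollary without proof, leaving implicit exactly the reduction you spell out, namely that the two preceding theorems reduce both problems to evaluating $x_0|A|$ and $x_0\|A\|$ via the block formulas for $M^*$ and $M^\omega$, the closed form for $f^*$ on $\E$, and the orbit dichotomy (Lemma~\ref{le:inf-strict}) for the scalar $\omega$-power. Your added effectivity hypotheses on $\F'$ (effective closure under $\vee$, composition, and decidable order on computable points) are tacitly assumed by the paper as well, so there is no substantive divergence.
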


In~\cite{DBLP:conf/atva/EsikFLQ13}, the complexity of the decision
procedure has been established for important subclasses of energy
functions.

\section{Conclusion and Further Work}

We have shown that energy functions form a $^*$-continuous Kleene
$\omega$-algebra~\cite{DBLP:conf/dlt/EsikFL15}, hence that
$^*$-continuous Kleene $\omega$-algebras provide a proper algebraic
setting for energy problems.  On our way, we have proven more general
results about properties of finitely additive functions on complete
lattices which should be of a more general interest.

There are interesting generalizations of our setting of energy
automata which, we believe, can be attacked using techniques similar
to ours.  One such generalization are energy problems for \emph{real
  time} or \emph{hybrid} models, as for example treated
in~\cite{DBLP:conf/hybrid/BouyerFLM10, DBLP:conf/formats/BouyerFLMS08,
  DBLP:journals/pe/BouyerLM14, DBLP:conf/lata/Quaas11}.  Another
generalization is to higher dimensions, like
in~\cite{DBLP:conf/ictac/FahrenbergJLS11, DBLP:conf/birthday/JuhlLR13,
  DBLP:journals/iandc/VelnerC0HRR15} and other papers.

\bibliographystyle{plain}
\bibliography{mybib}

\end{document}